\g@addto@macro\bfseries{\boldmath}
\theoremstyle{definition}
\newtheorem{definition}{Definition}
\newtheorem*{rep@theorem}{\rep@title}
\newcommand{\newreptheorem}[2]{%
\newenvironment{rep#1}[1]{%
 \def\rep@title{#2 \ref{##1}}%
 \begin{rep@theorem}}%
 {\end{rep@theorem}}}
\newtheorem{theorem}{Theorem}
\newtheorem{lemma}{Lemma}
\newcommand{\RR}{\mathds{R}}
\definecolor{checkgreen}{rgb}{0.3, 0.8, 0.3}
\definecolor{warnred}{rgb}{0.8, 0.15, 0.1}
\definecolor{todored}{rgb}{1, 0, 0}
\definecolor{sorange}{rgb}{0.8, 0.3, 0}
\definecolor{forestgreen}{rgb}{0.0, 0.7, 0.0}
\definecolor{darkred}{rgb}{0.8, 0.0, 0.0}
\newcommand{\MYhref}[3][blue]{\href{#2}{\color{#1}{#3}}}%
\begin{document}

\title{Full classification of Pauli Lie algebras}

\author{{Gerard~Aguilar}}
\affiliation{Dahlem Center for Complex Quantum Systems, Freie Universit\"{a}t Berlin, 14195 Berlin, Germany}

 \author{\MYhref[black]{https://orcid.org/0000-0002-9409-193X}{Simon Cichy}}
 \affiliation{Dahlem Center for Complex Quantum Systems, Freie Universit\"{a}t Berlin, 14195 Berlin, Germany}

\author{\MYhref[black]{https://orcid.org/0000-0003-3033-1292}{Jens~Eisert}}
 \affiliation{Dahlem Center for Complex Quantum Systems, Freie Universit\"{a}t Berlin, 14195 Berlin, Germany}

 \author{\MYhref[black]{https://orcid.org/0000-0003-1626-2761}{Lennart Bittel}}
 \affiliation{Dahlem Center for Complex Quantum Systems, Freie Universit\"{a}t Berlin, 14195 Berlin, Germany}

\date{\today}

\begin{abstract}
    Lie groups, and therefore Lie algebras, are fundamental structures in quantum physics that determine the space of possible trajectories of evolving systems. However, classification and characterization methods for these structures are often impractical for larger systems. In this work, we provide a comprehensive classification of Lie algebras generated by an arbitrary set of Pauli operators, from which an efficient method to characterize them follows. By mapping the problem to a graph setting, we identify a reduced set of equivalence classes: the free-fermionic Lie algebra, the set of all anti-symmetric Paulis on $n$ qubits, the Lie algebra of symplectic Paulis on $n$ qubits, and the space of all Pauli operators on $n$ qubits, as well as controlled versions thereof. Moreover, out of these, we distinguish 6 Clifford inequivalent cases and find a simple set of canonical operators for each, which allow us to give a physical interpretation of the dynamics of each class.
    Our findings reveal a no-go result for the existence of small Lie algebras beyond the free-fermionic case in the Pauli setting and offer efficiently computable criteria for universality and extendibility of gate sets. These results bear significant impact in ideas in a number of fields like quantum control, quantum machine learning, or classical simulation of quantum circuits.
\end{abstract}
\maketitle

\section{Introduction}

Lie groups emerge as a fundamental mathematical structure in quantum physics due to their intrinsic connection to the Schrödinger equation. The set 
of possible trajectories of an evolving quantum system is confined to the Lie group associated with the generators of the dynamics. Consequently, the study and classification of continuous symmetries and dynamics in quantum physics can be framed in terms of the analysis of such objects. Often, this analysis can be simplified by instead examining the tangent spaces of these Lie groups, namely Lie algebras, where the commutation relations of their elements encode the essential information about the evolution. 

This connection has been studied extensively in several fields within quantum information theory and quantum computation. In quantum control theory, universality and controllability questions are entirely ones of reachability, and hence Lie-algebraic in nature, making the study of these structures crucial for the design of universal quantum gate-sets and simulators \cite{QuantumControlReview,Dirr,PhysRevA.51.1015,PhysRevLett.74.4087}. Lie algebras have also been shown to play a crucial role in understanding and designing variational quantum algorithms \cite{Variational,McClean_2016} and quantum machine learning methods \cite{biamonte2017quantum}, where in certain scenarios have been directly linked to phenomena like barren plateaus and overparametrization \cite{BarrenPlateaus}. Additionally, approaches such as geometric quantum machine learning \cite{PRXQuantum.5.020328,PRXQuantum.4.010328,Anschuetz} also rely on Lie algebraic notions in order to simplify training by embedding problem specific symmetries into the circuits. Finally, ideas in quantum circuit complexity, like measures based on operators spread, are also strictly related to the Lie algebra spanned by the circuit generators \cite{Nielsen:2006mn2,ComplexityGrowth,PRXQuantum.2.030316,OperatorSpread} , and hence also the classical simulability of some families of quantum circuits \cite{LieSimulation,begusicSimulatingQuantumCircuit2023,ZoltanOld,gu2021fast}. 

Classification and characterization of Lie algebras are thus crucial for the further development of these fields. In a general setting however, this task often becomes intractable. Consequently, progress in this regard has been restricted to specific instances, where it is possible to rely on certain algebraic structural properties, like \textit{simplicity} \cite{cartan_simple}, or to construct an explicit basis of the Lie algebra, which is only feasible under strict assumptions on the generators \cite{wiersema2023classification,pozzoli2022lie,PhysRevLett.119.220502,chapman2020characterization,chapman2023unified,fendley2023free,bruschi2023deciding}, like locality or periodicity.

This work adds to this literature by giving a comprehensive classification of all \emph{ Pauli Lie algebras}, these being all Lie algebras generated by any set of Pauli operators, and also providing an efficient method to determine the Lie algebra of an arbitrary set of Pauli generators. Pauli operators are particularly interesting because they form a basis for all Hermitian operators. As such, they appear as natural generators in many popular gate sets in the context of digital quantum computing and become relevant in a plethora of tasks, like whenever one considers Hamiltonian simulation techniques based on product formulas \cite{lloyd1996universal,Childs2019fasterquantum,Fungible,Faehrmann,begusicSimulatingQuantumCircuit2023}.  Moreover, their particular commutation structure resembles that of Majorana operators in fermionic systems, thus making them directly useful for their study.

Due to these commutation and anti-commutation rules of Pauli operators, which we will leverage to simplify the problem, our work bears big resemblance to studies on (quasi-)Clifford Lie algebras in the mathematics literature \cite{gintz2018classifying, Cliffordalg, cuypers2021quasi}. 
While these studies focus on the mathematical structure of the Lie algebra, we aim to characterize these sets in a way that allows us to make further remarks about the embedding Hilbert space and the physics of their evolution.
The key component in our analysis will be to map the classification problem to a graph reduction problem. 
This approach will be similar in spirit to previous works like Bouchet's on graph equivalence under local complementations \cite{Bouchet} -- that has important implications
on assessing whether two graph states
\cite{Hein04}
are equivalent under sequences of local Clifford
conjugations \cite{PhysRevA.70.034302,Hein04} -- or others considering the vertex minor problem, where vertex deletion is also considered \cite{dahlberg2022complexity,PhysRevA.106.L010401}. In our case, we will find a set of universal equivalence classes for graphs under conditioned local complementations. This will allow us to identify the existence of 6 total Clifford inequivalent Lie algebras. Having that any given collection of initial operators can be mapped through Clifford operations to one of these six simpler sets, we will then derive their Lie algebra type and provide a physical interpretation of their structure along with their relationships.

This reveals that the biggest polynomially-sized Pauli Lie algebra is that of free-fermions together with their parity operator. All other non-free-fermionic cases correspond to the full set of Paulis on $n$-qubits as well as an embedding thereof in an $(n+1)$-qubit Hilbert space, the space of all anti-symmetric Pauli operators, and the space of symplectic Pauli operators, each exhibiting a dimension growing exponentially with system size. Beyond these cases, only controlled versions of these dynamics exist.

The remainder of this work is structured as follows. In \Cref{sec:definitions} we define the structures we are interested 
in, as well as the operations required for the reduction. In \Cref{sec:canonical_graphs_algebras} we introduce the main theorems of this work regarding graph reductions and their mapping to Lie algebra types. We add the distinction between Clifford inequivalent Lie algebras in \Cref{sec:clifford_equivalence}. With all Lie algebra types layed down, we discuss their physical interpretation in \Cref{sec:lie_algebra_description}. In \Cref{sec:disconnected_lie}, 
we make a final remark regarding disconnected graphs for completeness, and we finalize in \Cref{sec:conclusion} with a brief discussion of the implications of our results. We prove most of the statements in this work in the Supplemental Material.

\section{Definitions and setting}
\label{sec:definitions}
The statement of our results will require some preparation. Let us first define our main object of interest.

    \begin{definition}[Pauli Lie algebra]
        \label{def:Pauli_Lie}
        We define a Pauli Lie algebra $\mathfrak{g}$ as the Lie algebra whose elements are Pauli strings of some $n$-qubit system, which is closed under the Lie bracket, ($[\cdot,\cdot]_L:\mathfrak{g}\times\mathfrak{g}\rightarrow\mathfrak{g}$), given by the imaginary commutator of its elements, $\comm{P_i}{P_j}_L\coloneqq i\comm{P_i}{P_j} =  i(P_iP_j-P_jP_i)$.
        Then, a set of Paulis $\mathcal G=\{P_1,\dots,P_{n_G}\}$ is a generating set of $\mathfrak{g}$ if they span all elements in $\mathfrak{g}$ through Lie algebraic operations. Then we write $ \langle \mathcal G\rangle_{\RR,[\cdot,\cdot]}=\mathfrak{g}$.   A generating set is minimal, if removing any $P_i$ would change the size of the Lie-algebra. 
    \end{definition}

    Pauli Lie algebras have a very appealing structure since Pauli strings have simple commutation relations, namely given any two Paulis they either commute or anti-commute and the resulting operator is again a Pauli up to a constant. For two non-commuting Paulis $P,Q$, it holds that  
    \begin{align}
        [P,Q]= 2  P Q
    \end{align}
    where $P Q$ is itself another Pauli. Moreover, Pauli operators are particularly interesting since they form an orthonormal basis for the space of operators.
    A natural question to ask then, is whether one can characterize all Pauli Lie algebras generated by some arbitrary set $\mathcal{G}=\{P_1,\dots, P_{n_G}\}$.
     Like we already mentioned in the introduction, one common approach to address this, has been to explicitly construct a basis for the Lie algebra by repeatedly taking nested commutators of the generating elements \cite{wiersema2023classification,gohLiealgebraicClassicalSimulations2023}. However, since the size of the Lie algebra could be exponential, this method is in many cases not scalable. 

    Here, we will propose a new strategy that overcomes this by mapping the problem to a \emph{graph problem}. Exploiting some of the already mentioned properties of Pauli operators, we will be able to characterize Pauli Lie algebras for an arbitrary set of generators. This connection is expected to be
    fruitful, in a way that is reminiscent of 
    the
    way graph states \cite{Hein04} 
    constitute a theoretical
    laboratory to study questions in entanglement theory.

    Since the basis of the Lie algebra is generated through the commutator of its elements, and the commutator of two Paulis is either 0 or their product, all the information of the basis elements is already encoded exclusively in the commutation relations of the elements in $\mathcal{G}$, so long as the generators in $\mathcal{G}$ are Lie algebraically 
    independent, that is, no generator can be computed through commutators of others. This is satisfied if the generators form a minimal set. For now we assume that we are given a minimal set of generators, but then we lift this assumption in \Cref{sec:minimal_generators}.
    This allows us to study the Lie algebra by the anti-commutation relations of the generators $\mathcal{G}$.
    
    \begin{definition}[Anti-commutation graph]
        Given a set of Pauli operators $\mathcal{G}=\{P_i\}_{i=1}^{n_G}$, we call its anti-commutation graph the graph $\Gamma=(\mathcal{V},\mathcal{E})$, where every vertex corresponds to some $P_i$, and two vertices are connected by an edge if their corresponding Paulis anti-commute, i.e.,
        \begin{equation}
        \mathcal{V} = \{P_i\}_{i=1}^{n_G}, \; \mathcal{E} = \{(P_i,P_j)|[P_i,P_j]\neq 0\}.
    \end{equation}
    \end{definition}
    In general, this can be any possible graph for sufficently many qubits.
    In 
    \Cref{fig:canonical_types}(\textit{a}, left),
    we show the anti-commutation graph for the Lie algebraically independent set $\mathcal{G}=\{Z_0,X_0,Y_1X_2,X_1X_3,Z_3,Z_4X_3Z_1,X_4\}$.

   \begin{figure*}
       \centering
       \includegraphics[scale=0.21]{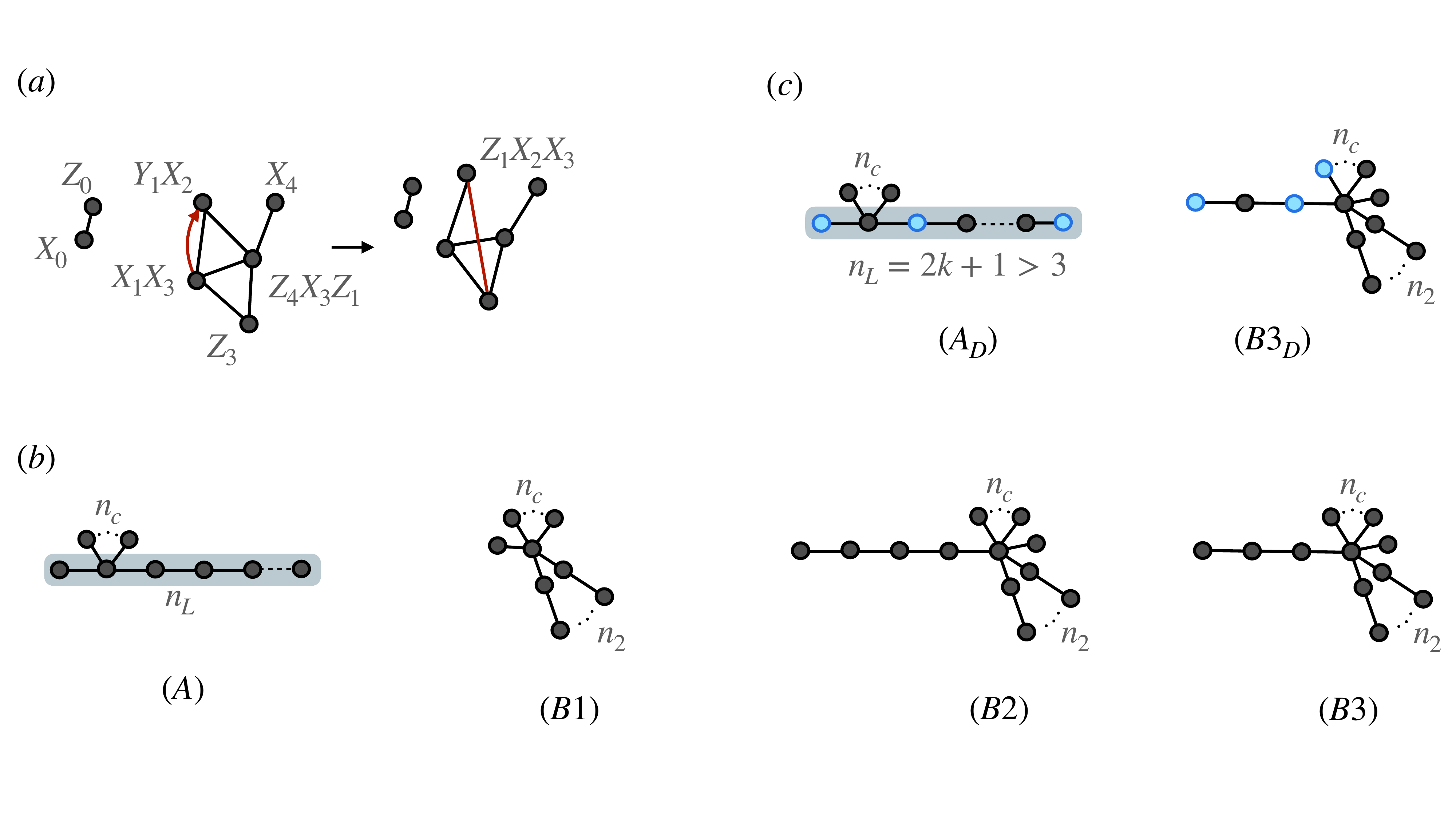}
       \caption{(a) Anti-commutation graph for the set 
        $\mathcal{G}=\{Z_0 ,X_0 ,Y_1X_2, X_1X_3, Z_3, Z_1X_3Z_4, X_4\}$. Contracting $X_1X_3$ onto $Y_1X_2$ substitutes the latter with $i\comm{X_1X_3}{Y_1X_2} \propto Z_1X_2X_3$, which modifies the connectivity as discussed in the main text. (b)
        Canonical graph representatives of all 4 universal equivalence classes under contractions as given in \cref{thm:types}.
        (c) Different algebraic dependencies render isomorphic Lie algebras inequivalent under unitary transformations. As discussed in \cref{thm:clif_algdeb} these only occur in graphs of type $\textit{A}$ and $\textit{B3}$ involving the blue coloured vertices, thus giving rise to 6 total unitary-inequivalent Pauli Lie algebras.
        }
        \label{fig:canonical_types}
   \end{figure*}
    
    Our goal will be to find a set of universal equivalence classes for graphs, that allows us to determine the Lie algebra associated to any given graph. To that end, we will define graph operations that leave the underlying Lie algebra invariant, and restate our problem as a graph reducibility problem under such operations. The operation in question will be a conditional local complementation. Namely, given two vertices $(p,q)$, we will complement edges $(p,v)$ conditioned on $q$ sharing an edge with $v$ or not. On the level of the Paulis, this will correspond to replacing, in the generating set, $\{P,Q\}\mapsto\{\pm i PQ,Q\}$, i.e., replacing a generator by its commutator with another generator. 
    
    \begin{definition}[Contraction]
        Given two non-commuting generators $P,Q\in \mathcal{G}$, a contraction of $Q$ onto $P$ maps $P\mapsto P'=\pm\frac{1}{2}i \comm{P}{Q}$ and $\mathcal{G}\mapsto\mathcal{G}'\ni P'$. This leaves the Lie algebra invariant $\mathfrak{g}=\mathfrak{g}'$.
    \end{definition}
    It is easy to see that indeed this operation does not change the Lie algebra because of the fact that $P=\pm \frac{1}{2} i\comm{P'}{Q}$. As such, the set generated from nested commutators of $\mathcal{G}'$ contains all those generated by $\mathcal{G}$.
     Like we said however, for the anti-commutation graph $\Gamma$, this transformation will result in a new graph $\Gamma'$ with modified connectivity. In more concrete terms, for the associated vertices $p,q$, corresponding to $P,Q\in\mathcal{G}$, a contraction of $q$ onto $p$ amounts to a complementation of the edges $E_p=\{(p,v) | (p,v)\in E \land (q,v)\in E\} $. 
     In terms of the adjacency matrix of the graph $A_{\Gamma}$, this is performed by adding the the columns and rows corresponding to vertex $q$ onto the columns and rows corresponding to vertex $p$ in $A_{\Gamma}$.
    
    An example of such a contraction is shown in 
    \Cref{fig:canonical_types}\textit{(a, right)}.
    In the following section we will describe universal equivalence classes under these contractions and how can that help us determining the Lie algebra in the most general case.
    
    \section{Canonical graph types and Lie algebras}
    \label{sec:canonical_graphs_algebras}
    
    We establish that using graph contractions, every anti-commutation graph is reducible to one of a few canonical types.
     \begin{theorem}[Canonical graphs]
        \label{thm:types}
    For any given connected graph $\Gamma$ with at least two vertices, there exists a sequence of contractions which result in one of four graphs:
    \begin{enumerate}[label=(\Alph*)]
        \item A line graph with with $n_L$ vertices as well as $n_c\in\mathbb{N}_0$ single vertices connected to the second to last vertex.
        \item A star graph with legs of length of at most $4$ and where the number of legs of length 1 and 2 are $n_c+1$, with $n_c\in\mathbb{N}_0$ and $n_2\in\mathbb{N}_+$, respectively. Additionally, they can present the following:
        \begin{enumerate}[label=(\arabic*)]
            \item no legs of length $3$ or $4$,
            \item one leg of length $4$ and no leg of length $3$,
            \item one leg of length $3$ and no leg of length $4$.
        \end{enumerate}
    \end{enumerate}
    The canonical 
    types are shown in \Cref{fig:canonical_types}(b).
    \end{theorem}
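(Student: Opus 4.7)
My plan is to proceed by strong induction on a suitable complexity measure of $\Gamma$ while systematically choosing contractions that bring $\Gamma$ closer to a canonical form. At the level of adjacency matrices over $\mathrm{GF}(2)$, a contraction of $Q$ onto $P$ corresponds to simultaneously adding row $q$ to row $p$ and column $q$ to column $p$, subject to the restriction that the pivot entry $A_{pq}$ equals $1$ (so an edge must be present). This reframes the theorem as a normal-form result for symmetric $\mathrm{GF}(2)$-matrices with zero diagonal under pivot-restricted elementary congruence. The base case $|V|=2$ is a single edge, which is already Type A with $n_L=2$, $n_c=0$.

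The inductive step would then proceed in three stages. First, I would argue that any connected graph can be reduced to a tree: given a longest simple path $\pi=(v_1,\dots,v_L)$ in $\Gamma$, any chord between two vertices of $\pi$ or any cycle involving an off-path vertex can be attacked by contracting one of its edges, which toggles the entire neighborhood of one endpoint and, with a careful choice, removes more edges than it creates. A lexicographic monovariant (number of triangles, then cycle rank $|E|-|V|+1$, then number of edges) can be shown to strictly decrease, so the process terminates in a tree. Second, I would show that on a tree with spine $\pi$ all off-path vertices can be slid along the spine by local contractions: a tail attached to $v_i$ with $i<L-1$ is shifted toward $v_L$ by contracting $v_{i+1}$ onto it, which toggles its adjacencies to the neighbors of $v_{i+1}$ (namely $v_i$ and $v_{i+2}$). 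Iterating concentrates all tails at the second-to-last vertex $v_{L-1}$, yielding Type A whenever $L$ is large enough.

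Third, I would address the short-path case, where $\pi$ is too short to absorb the tails and the graph is naturally a star. Here, the claim reduces to showing that stars with any leg of length $\geq 5$ admit a further contraction that either shortens that leg or converts the configuration into one with a longer spine, at which point stage two applies. The three residual sub-cases B1, B2, B3 correspond to the irreducible patterns when the maximum leg length is $2$, $4$, and $3$ respectively; ruling out further reductions amounts to checking that the allowed contractions (those pivoting on existing edges at or emanating from the centre) all map each configuration to itself up to relabelling, which is a finite verification.

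The main obstacle is establishing that the leg-length bound of $4$ in Type B is tight. This requires exhibiting explicit contraction sequences that shorten legs of length $\geq 5$ while showing that legs of length $3$ or $4$ combined with shorter legs and the mandatory $n_2 \geq 1$ length-$2$ leg genuinely resist simplification under the pivot restriction; the cases $4$ and $3$ must be kept separate precisely because their combination would itself reduce. A subtler point is ensuring, throughout the reduction, that the chosen contractions remain available (the pivot entry must remain $1$) and that each step strictly decreases the monovariant without undoing earlier progress, which demands a careful prioritisation of reduction steps — for instance, always eliminating chords and cycles before moving tails, and never modifying an already-reduced sub-tree.
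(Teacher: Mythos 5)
Your reformulation as a normal-form problem for symmetric $\mathrm{GF}(2)$ matrices under pivot-restricted congruence is correct, and your three-stage outline has the right overall shape, but two of the stages contain genuine gaps. The concrete error is in stage two: the tail-sliding move is not a legal contraction. To contract $v_{i+1}$ onto a tail $t$ attached at $v_i$ you need the pivot entry $A_{t,v_{i+1}}=1$, i.e., an edge between $t$ and $v_{i+1}$, and by hypothesis $t$ is adjacent only to $v_i$. The effect you want (delete $(t,v_i)$, create $(t,v_{i+2})$) is achievable, but only via a composite sequence in which every intermediate pivot edge must be checked to exist; the paper's analogous move, detaching a branch from an interior vertex and reattaching it at the centre, takes roughly a dozen toggles (the sequence in \Cref{eqn:contraction_move_legs}) and relies on the presence of a leg of length $1$ at the centre.

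The deeper gap is stage one. Reducing an arbitrary connected graph to a tree by a lexicographic monovariant is asserted, not proved, and it is exactly where the difficulty lives: a single contraction complements all edges from $p$ into the neighbourhood of $q$, so one step can create many new triangles and chords simultaneously, and you give no rule for choosing the pivot that guarantees the triple (triangles, cycle rank, edges) strictly decreases. The paper avoids this global control problem entirely by adding one vertex $V$ at a time to an already-canonical graph and proving (\Cref{thm:1_lit}, via the ``lightning''/lit-only $\sigma$-game formalism) that any attachment pattern of $V$ is equivalent, under toggles within the canonical part, to $V$ having a single neighbour; the remaining work is then the explicit leg-surgery of \Cref{lemma:shortening} (splitting legs of length $>4$, cancelling pairs of length-$4$ legs, reducing a length-$3$ together with a length-$4$ leg, and collapsing multiple length-$3$ legs to one), which is precisely the content your stage three defers. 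Finally, your stage three aims at the wrong target: the theorem does not claim the canonical types are irreducible or closed under contractions (they are not --- a contraction generally leaves canonical form and must be steered back), so verifying that B1, B2, B3 ``map to themselves up to relabelling'' is neither what is needed nor true.
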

    \begin{proof}(Sketch)
        The full proof is in  \Cref{app:graph_equivalences}.
        The statement is proven by induction. The procedure works by selecting some initial vertex at random and iteratively adding new vertices. 
        We show that if the starting graph is one of the considered here, if one adds a new vertex with some arbitrary connectivity to it, there is always a sequence of contractions that transforms the new graph to one of these types.
    \end{proof}
    One important remark is that for any given initial graph $\Gamma$, the set of contractions that map it to one of these canonical representatives can be found efficiently. Since the proof works inductively in the number of vertices, applying a known set of transformations, one can find an algorithm that is efficient in the number of vertices of $\Gamma$ that finds the corresponding class. We leave the explicit algorithm for future work.

    With this, one can then classify the Lie algebras generated by a minimal set of Paulis by only looking at their graph and the canonical representative it maps to.
    This brings us to the main result providing  the  classification of the Lie algebra types.

    \begin{theorem}[Classification of Pauli Lie algebra types]
        \label{thm:graph_to_lie}
        Given a set of Lie algebraically independent Paulis $\mathcal{G}$, with connected anti-commutation graph $\Gamma$, then its Pauli Lie algebra corresponds to
        one of the following cases.
        \begin{itemize}
            \item $\bigoplus_{i=1}^{2^{n_c}}\mathfrak{so}(n_L+1)$   when $\Gamma$ maps to some $\Bar{\Gamma}$ in class \textit{A}.
            \item  $\bigoplus_{i=1}^{2^{n_c}}\mathfrak{sp}(2^{n_2})$   when $\Gamma$ maps to some $\Bar{\Gamma}$ in \textit{B1}.
            \item  $\bigoplus_{i=1}^{2^{n_c}}\mathfrak{so}(2^{n_2+3})$   when $\Gamma$ maps to some $\Bar{\Gamma}$ in \textit{B2}.
            \item  $\bigoplus_{i=1}^{2^{n_c}}\mathfrak{su}(2^{n_2+2})$   when $\Gamma$ maps to some $\Bar{\Gamma}$ in \textit{B3}.
        \end{itemize}
    \end{theorem}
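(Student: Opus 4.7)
The plan is to invoke \Cref{thm:types} to replace $\mathcal G$ by a generating set whose anti-commutation graph $\bar\Gamma$ is one of the four canonical types; since contractions preserve the Lie algebra, it suffices to determine $\mathfrak g$ for each canonical form. I would treat the $n_c$ distinguished ``control'' vertices (the isolated leaves attached to the second-to-last line vertex in type \textit{A}, or $n_c$ of the length-$1$ legs of the star in types \textit{B}) separately from the remaining ``core'' subgraph, which carries the non-trivial structural information.

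Handling the controls produces the outer $\bigoplus_{i=1}^{2^{n_c}}$ factor. The key observation is that each control $C_j$ anti-commutes with exactly one core generator $Q$ and commutes with all others, so suitable Pauli products built from the $C_j$'s together with appropriately chosen commuting core generators give $n_c$ independent involutions that commute with every element of $\mathfrak g$. Simultaneous $\pm1$-eigenspaces of these central involutions carve the ambient Hilbert space into $2^{n_c}$ sectors, on each of which the remaining core subalgebra acts faithfully and identically, yielding the claimed direct sum.

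For the core on each sector I would identify the Lie algebra type. Type \textit{A} is the line graph, whose anti-commutation pattern matches cumulative products of Majorana operators: by constructing Majoranas $\gamma_0,\dots,\gamma_{n_L}$ such that the $k$-th line generator equals $i\gamma_{k-1}\gamma_k$, nested commutators close on the full family $\{i\gamma_a\gamma_b : 0\le a<b\le n_L\}$, identifying the algebra as the free-fermionic $\mathfrak{so}(n_L+1)$. For the three star types \textit{B1}--\textit{B3}, I would first count the Paulis produced by iteratively expanding commutators radiating from the centre, showing the dimension matches $\mathfrak{sp}(2^{n_2})$, $\mathfrak{so}(2^{n_2+3})$, or $\mathfrak{su}(2^{n_2+2})$ respectively. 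I would then pin down the isomorphism type by exhibiting, for \textit{B1} and \textit{B2}, a Pauli that anti-commutes with every element of $\mathfrak g$ and whose square is $\pm I$, fixing a symplectic (respectively symmetric) invariant bilinear form; in \textit{B3} I would show no such Pauli exists, forcing the algebra to be the full $\mathfrak{su}$.

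The main obstacle is this last step. Dimension counts alone cannot separate $\mathfrak{so}$, $\mathfrak{sp}$, and $\mathfrak{su}$, so one really has to explicitly construct or rule out an invariant bilinear form. Here the length of the special leg in \textit{B2} versus \textit{B3} matters in a subtle parity-dependent way: the length-$4$ leg produces an extra Pauli anti-commuting with the whole algebra whose square has the correct sign to promote the symplectic form to a symmetric one, whereas the length-$3$ leg yields an operator that obstructs every candidate invariant form. Carrying this out rigorously, uniformly in $n_2$ and $n_c$, is the technical heart of the proof and the step I would allocate the most care to.
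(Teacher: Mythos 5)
Your overall skeleton matches the paper's: reduce to a canonical graph via \Cref{thm:types}, peel off the length-$1$ legs as central involutions producing the $\bigoplus_{i=1}^{2^{n_c}}$ factor (this is essentially \Cref{lem:control}), and build explicit Majoranas for type \textit{A}. The gap is in how you settle the star types. First, your plan to ``count the Paulis produced by iteratively expanding commutators radiating from the centre'' is precisely the nontrivial step, not a routine one: you must characterize exactly which products of generators are reachable as nested commutators. The paper does this by assigning to each product $P_C=\prod_i P_i^{C_i}$ a vertex colouring and proving (\Cref{lemma:non_zero_comm}) that the parity of the number of connected components of the coloured subgraph is conserved under single commutator steps, that every odd-parity colouring is reachable by an explicit algorithm, and that there is one genuine exception in \textit{B3}. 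Without this invariant and the constructive converse, your dimension count is unsupported, and in particular you cannot rule out that the generated algebra is a \emph{proper} subalgebra of the target; your closing inference for \textit{B3} (``no invariant form exists, hence the algebra is all of $\mathfrak{su}$'') is not valid on its own for the same reason --- one must exhibit every Pauli as a valid colouring, as the paper does using the algebraic dependence $Z_0\cdot Z_0Z_i\cdot Z_i=\mathbf 1$ to flip the parity of a colouring without changing the Pauli.

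Second, your criterion for identifying the form is misstated. The defining relation is the transpose condition $P^TQ+QP=0$, which for a Pauli $P$ amounts to ``$P$ anti-commutes with $Q$ if $P^T=P$ and commutes with $Q$ if $P^T=-P$''; it is not anti-commutation of a single Pauli with every element of $\mathfrak g$. For \textit{B2} the invariant form is $Q=\mathbf 1$ (all purely imaginary Paulis), and no Pauli anti-commutes with the whole algebra, so the step as you describe it fails there. The paper instead checks that every canonical generator satisfies $P^TQ+QP=0$ for the appropriate $Q$ (giving one inclusion, since the condition is preserved by the Lie bracket) and then uses \Cref{lemma:odd_colouring_property} to show that a product of generators satisfies the condition if and only if its colouring has an odd number of connected components, which by \Cref{lemma:non_zero_comm} is exactly the reachable set --- giving the reverse inclusion. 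You would need to replace your ``anti-commuting Pauli'' test by this two-sided argument (or an equivalent one) for the proof to go through.
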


    \begin{proof}
        (Sketch) The full proof can be found in the appendices. In \Cref{app:controls} we prove the $\bigoplus_{i=1}^{2^{n_c}}$ terms 
        .The main idea will be to show that several legs of length 1 act as a control register, thus splitting $\mathfrak{g}$ into two equal, smaller Lie algebras. The rest of the statement is proven by construction in \Cref{app:graph_lie_proof}. The key will be to come up with a \textit{good} set of canonical Pauli operators, for which it is easy to show, and since the Lie algebra type is uniquely determined by the graph, argue for the general case. For graphs in class \textit{A}, we will explicitly construct a set of free-fermionic majorana operators and show that this is indeed equivalent to $\mathfrak{so}(n_L+1).$ For graphs in \textit{B1} and \textit{B2}, we will show that some set of canonical generators obey the properties defining $\mathfrak{sp},\mathfrak{so}$ respectively, and thus they span a sub-algebra of those. On the other hand, we will come up with a set of rules to compute non-zero commutators on these graphs, and show that every element in $\mathfrak{sp},\mathfrak{so}$ can actually be written as such a commutator, thus showing equality with the Pauli Lie algebra spanned by our set. Finally, for graphs in \textit{B3} we will just show that every Pauli can be computed through nested commutators of our generators.
    \end{proof}
    Crucially, the classification of the Lie algebras does not depend directly on the number of qubits of the system. For our purposes, however, besides the structure of the algebra, it will also be relevant to see how a Lie algebra acts within the physical Hilbert space, which we will address in the following sections.
    
    \section{Clifford equivalence of Pauli Lie algebras}
    \label{sec:clifford_equivalence}
    So far we have discussed the anti-commutation relations for a minimal set of Pauli generators. However, Paulis can also have algebraic dependencies, meaning that even if no generator can be written as a non-zero commutator of others, we could have some generators being a product of others. This is relevant for our next theorem
    \begin{lemma}(Clifford equivalence lemma)
        Two Pauli Lie algebras are Clifford equivalent if they can be mapped to the same anti-commutation graph $\Gamma$, where the generators share the same algebraic dependencies.
    \end{lemma}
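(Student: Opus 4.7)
The plan is to pass to the standard correspondence between the Pauli group modulo phases and the symplectic vector space $\mathbb{F}_2^{2n}$, equipped with the symplectic form $\omega$ for which $\omega(v(P),v(Q))=0$ if and only if $P$ and $Q$ commute. Under this correspondence, Clifford conjugations on Paulis are in bijection, up to phases, with elements of $\mathrm{Sp}(2n,\mathbb{F}_2)$. The two hypotheses of the lemma translate neatly: two generating sets $\mathcal{G}_1=\{P_i\}$ and $\mathcal{G}_2=\{Q_i\}$ having the same anti-commutation graph means that $\omega(v(P_i),v(P_j))=\omega(v(Q_i),v(Q_j))$ for all $i,j$; having the same algebraic dependencies means that the $\mathbb{F}_2$-linear relations among the $v(P_i)$ coincide with those among the $v(Q_i)$.

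First I would use the matching algebraic dependencies to promote the set map $v(P_i)\mapsto v(Q_i)$ to a well-defined $\mathbb{F}_2$-linear isomorphism $\phi$ between the spans $V_1=\operatorname{span}\{v(P_i)\}$ and $V_2=\operatorname{span}\{v(Q_i)\}$. The matching anti-commutation graph then ensures that $\phi$ preserves the symplectic form restricted to these subspaces by bilinearity, so $\phi$ is a symplectic isometry between two subspaces of $\mathbb{F}_2^{2n}$. Next I would invoke Witt's extension theorem for non-degenerate symplectic spaces over $\mathbb{F}_2$: any symplectic isometry between subspaces extends to a symplectic automorphism $\Phi\in\mathrm{Sp}(2n,\mathbb{F}_2)$ of the whole space. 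Lifting $\Phi$ through the Clifford--symplectic correspondence yields a Clifford unitary $C$ with $CP_i C^\dagger=\pm Q_i$; the residual sign ambiguities can be absorbed into the generators without altering the $\mathbb{R}$-span closed under the Lie bracket, so $C\mathfrak{g}_1C^\dagger=\mathfrak{g}_2$ as required.

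The main obstacle I expect lies in the first step: one must check carefully that ``same algebraic dependencies'' is exactly the condition that makes the partial map $v(P_i)\mapsto v(Q_i)$ extend to a well-defined linear isomorphism, rather than merely a bijection of generating sets that happens to preserve pairwise symplectic products. Concretely, one needs that every $\mathbb{F}_2$-relation $\sum_{i\in S}v(P_i)=0$ transports to $\sum_{i\in S}v(Q_i)=0$ and vice versa; this is precisely the hypothesis, and it is exactly what fails in the counterexamples (e.g.\ anti-commutation-graph isomorphic sets where one satisfies an extra product relation $P_iP_jP_k\propto\II$ and the other does not). A secondary subtlety is invoking Witt's theorem in characteristic $2$, where one must use the alternating-bilinear formulation, but this is a classical fact and no extra work is required for our non-degenerate ambient form.
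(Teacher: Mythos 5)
Your proposal is correct and is essentially the argument the paper intends: the paper's proof is a one-line citation to the Heisenberg/symplectic representation of the Clifford group, and your write-up (binary symplectic vectors, well-definedness of the partial isometry from matching $\mathbb{F}_2$-relations, Witt extension to an element of $\mathrm{Sp}(2n,\mathbb{F}_2)$, lifting to a Clifford, and absorbing the residual signs into the real span) is precisely a rigorous expansion of that citation. No gaps; the two subtleties you flag are the right ones and you resolve them correctly.
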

   
    \begin{proof}
        This follows from the general properties of Cliffords \cite{gottesman1998heisenberg}.
    \end{proof}
    In general it also holds that if two Lie algebras are not Clifford equivalent, then they are also not unitarily equivalent since no unitary can change commutation relations or algebraic dependencies.
    As such, we would like to determine all possible algebraic dependencies within the Pauli Lie algebra of some minimal generating set. We can establish the following theorem.
    
    \begin{theorem}[Limits to algebraic dependencies]\label{thm:clif_algdeb}
        Every minimal generator Pauli Lie algebra has at most one algebraic dependence. Algebraic dependencies can only occur for graphs of type \textit{A} and \textit{B3}. These can be mapped by contraction to only one possible case each which are shown in \Cref{fig:canonical_types} (c).
        As such, there is a total of $6$ Clifford inequivalent families of Pauli 
        Lie algebras.
    \end{theorem}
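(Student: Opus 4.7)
My plan is to translate algebraic dependencies into the $\mathbb{F}_2$-symplectic picture of Paulis, use that contractions preserve this structure to reduce to canonical graphs, and then exploit minimality of $\mathcal{G}$ to upper-bound the number of independent dependencies.

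Each $P_i \in \mathcal{G}$ corresponds up to phase to a vector $v_i \in \mathbb{F}_2^{2n}$, and an algebraic dependence $\prod_{i\in S} P_i = \pm I$ is equivalent to $\sum_{i \in S} v_i = 0$; dependencies thus form the kernel $\ker V \subseteq \mathbb{F}_2^{n_G}$ of the matrix $V$ whose columns are the $v_i$. A contraction $P_i \mapsto P_iP_j$ performs the column operation $v_i \mapsto v_i + v_j$, which is invertible, so $\dim\ker V$ is invariant and one may work directly with the canonical graph of \cref{thm:types}. For any dependency $S$, pairing $\sum v_i = 0$ with $v_j$ under the symplectic form gives the parity condition $|N(j)\cap S|\equiv 0 \pmod 2$ for every vertex $j$: the indicator of $S$ lies in $\ker A$ for $A$ the $\mathbb{F}_2$-adjacency matrix of $\Gamma$. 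A direct computation of $\ker A$ for each canonical graph then shows that parity-compatible subsets consist of: for B1, B2 and for A with $n_L$ even, only even-sized collections of length-$1$ legs (\emph{singles}); for A with $n_L$ odd, those of this form or all odd-indexed line vertices together with an even subset of singles; for B3, those of this form or the two endpoints of the length-$3$ leg together with an even subset of singles.

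Minimality then rules out pure-singles dependencies via the iterated commutator identity $[w s_1 s_2, w s_3]_L \propto s_1 s_2 s_3$, valid whenever $w$ is a common neighbour of the pairwise-commuting $s_1, s_2, s_3$, and easily derived from $[w, s]_L \propto ws$ and $[ws_1, s_2]_L \propto ws_1 s_2$: if $s_1 \cdots s_k = \pm I$ then iterating the identity exhibits $s_k$ as a nested commutator of the other generators, contradicting minimality. Two independent parity-allowed dependencies would produce a pure-singles dependency as their symmetric difference and are excluded in the same way; hence at most one algebraic dependence can appear, and by the parity enumeration only in types A (with $n_L \geq 5$ odd) and B3. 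To pin down the canonical graph in each such case, observe that contractions not involving the distinguished dependency subset $S_\star$ preserve it, so the induction in the proof of \cref{thm:types} restricted to such contractions reduces every dependency graph in each class to the unique representative shown in \cref{fig:canonical_types}(c). Combined with \cref{thm:types} this gives the $4+2=6$ Clifford-inequivalent families.

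The main obstacle I anticipate is the case analysis for B3, where one must verify — both in the parity computation and in the minimality argument, now applied along the length-$3$ leg rather than directly at the centre — that the only surviving dependency is the one associated with the two leg endpoints, and that the restricted contractions indeed collapse all such graphs to the single canonical representative depicted in the figure.
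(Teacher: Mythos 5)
Your strategy is essentially the paper's: reduce to the canonical graphs of \cref{thm:types} (contractions preserve the dependency structure), impose the parity constraint that a dependency set $S$ must meet every neighbourhood in an even number of vertices (your $\ker A$ condition is exactly the paper's ``the colouring shares an even number of edges with every vertex''), and then use minimality to kill pure length-$1$-leg dependencies. Your nested-commutator identity $[ws_1s_2,ws_3]_L\propto s_1s_2s_3$ is a clean and correct way to do that last step (the paper instead contracts so as to concentrate the dependency onto a single leg of length $1$ and then invokes $P_1P_2=\mathbf 1\Rightarrow P_1=\pm P_2$), and your symmetric-difference argument for ``at most one dependence'' is fine. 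However, there are two concrete defects. First, your $\ker A$ computation for \textit{B3} is wrong on a point that matters: the centre $O$ has exactly one coloured neighbour on the length-$3$ leg (namely $M_1$), so the parity condition at $O$ forces an \emph{odd} number of legs of length $1$ into the dependency, not an even number; in particular $\{M_1,M_3\}$ alone is not parity-admissible, and the canonical $B3_D$ relation is $\omega\cdot M_1\cdot M_3=\pm\mathbf 1$ with one leg of length $1$ (cf.\ the paper's $Z_0\cdot Z_0Z_n\cdot Z_n=\mathbf 1$). The asymmetry with type \textit{A} arises because there the vertex carrying the extra singles sits in the interior of the path and already has two coloured path-neighbours.

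Second, the step ``these can be mapped by contraction to only one possible case each'' is not established by your remark that contractions avoiding $S_\star$ preserve $S_\star$. After reducing to a canonical graph, the surviving dependency is the alternating odd-leg colouring together with \emph{some} parity-constrained subset of the legs of length $1$, and you still need to show that all of these choices are related by contractions that fix the canonical graph type. The paper does this explicitly: contracting the coloured legs of length $1$ through the centre implements $L_1\mapsto L_1L_2\cdots L_{2k+1}$ and then undoes the connectivity change, leaving exactly one coloured leg of length $1$. Without such a normalization you have not shown uniqueness of the representative in \cref{fig:canonical_types}(c), only that the dependency lives in the coset you computed. Both issues are repairable within your framework, but as written the \textit{B3} case — which you yourself flag as the delicate one — does not come out right.
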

    
    \begin{proof}
        (Sketch) The full proof can be found in \Cref{sec:appendix_algd}, but we 
        state the main idea here.
        For this it is enough to check algebraic dependencies within the generating set of the graphs in \Cref{thm:types}, as contractions preserve algebraic and Lie algebraic dependencies. Given these specific anti-commutation relations, the sets of vertices that can be algebraically dependent are limited to subsets $\{v_i\}_{i=1}^m\subseteq\Gamma$ involving legs of odd length, where the $v_i$'s appear in an alternating fashion in $\Gamma$. Moreover, we will show that since algebraic dependence between legs of length 1 also implies Lie algebraic dependence, the only cases left are the ones stated in the theorem. 
    \end{proof} 
    \subsection{Finding the minimal set of generators}
    \label{sec:minimal_generators}
    The scenario described above assumes that we have a minimal set of generators. In general, this might not be the case however. If the set is not minimal, then a generator can be removed. The strategy is to still perform all the graph transformations to get to a canonical form which leaves the Lie algebra invariant even it is not the minimal set. By Gaussian elimination one can identify algebraic dependencies between generators, which is a necessary condition for Lie algebraic dependence. For the canonical types, algebraic dependencies can only occur between legs of length 1 and on a long leg of odd length in an alternating structure (as in \cref{thm:clif_algdeb}). Since the latter is an allowed dependence, a redundant generator will necessarily result in an algebraic relation involving only legs of length $1$. In this case one of the involved vertices can be safely removed without changing the Lie algebra. This process can be (efficiently) repeated until all Lie-algebraic dependencies are removed.
    \section{General description of all Pauli Lie-algebras}
    \label{sec:lie_algebra_description}
    In this section, we will give an intuition behind our classification result. First of all, we show in \Cref{app:controls} that the direct sums in \Cref{thm:graph_to_lie}, translate to an actual direct sum within the physical Hilbert space. Namely, given several legs of length $1$, using Cliffords, we can always map the Lie algebra to be of the form
    \begin{align}
    \label{eqn:controls}
    \mathfrak{g}=\mathrm{span}(\{Z_1^{i_1}\otimes\cdots \otimes Z_{n_c}^{i_{n_c}} \otimes P_{\mathfrak{g'}}\}|\vec i\in \{0,1\}^{n_c})
\end{align}
    where $\vec i\in \{0,1\}^{n_c}$ labels each block in the direct sum, and $P_{\mathfrak g}$ corresponds to operators in the Lie algebra where all additional legs of length 1 are removed ($n_c=0$). Then, from \Cref{eqn:controls} it is clear
    that these additional legs of length 1 always introduce a 
    new qubit register acting as a control on the dynamics given by $P_{\mathfrak g}$. 

    With this, we can now move on to considering each of the 6 types from \Cref{thm:clif_algdeb} without additional legs of length 1 (and hence qubit registers).
    From now on, we will refer to $n$ as being the minimal number of qubits required to span their dynamics (hence the total number of minimal qubits will actually correspond to $\Tilde{n}=n+n_C$). We will show in \Cref{app:controls} that this number $n$, like for the case of controls, can easily be derived from the number of vertices of the graph.
 
    For type $\textit{A}_I$, the number of qubits corresponds to $n=\left \lfloor n_L/2 \right \rfloor$. Then the Lie algebra corresponds to $\mathfrak{so}(2n)$ or $\mathfrak{so}(2n-1)$ depending on the parity of $|\mathcal{G}|$. In both cases, this corresponds to the free-fermionic Lie algebra, where $\mathfrak{so}(2n)$ can be seen as generated by all Majorana operators on $n$ modes, i.e., both first and second order operators
    \begin{align}
            \mathfrak g = \mathrm{span} \left( \{\gamma_i\gamma_j\}_{i< j \in [2n]}\cup\{\gamma_i\}_{i\in [2n]} \right)
    \end{align}
    while $\mathfrak{so}(2n-1)$ can be thought as being generated just by second order operators
    \begin{align}
            \mathfrak g=\mathrm{span}(\{\gamma_i\gamma_j\}_{i< j \in [2n]})
        .\end{align}
    For the algebraic dependent case $\textit{A}_D$, the generating set in terms of Majorana operators becomes $\mathcal{G}=\{\gamma_1,\dots,\gamma_{2n}, \gamma_1\cdots \gamma_{2n}\}$ which corresponds to the set of free-fermionic operations on $n$ modes, together with their parity operator have a Lie algebra of type $\mathfrak{so}(2n+1)$
    \begin{equation}
    \begin{aligned}
            \mathfrak g & = \mathrm{span} \bigg( \{\gamma_i\}_{i\in [2n]} \cup \{\gamma_i\gamma_j\}_{i< j \in [2n]} \\
            \quad & \cup \Big\{ \prod_{j\neq i} \gamma_j \Big\}_{i\in [2n]} \cup \Big\{ \prod_{j\in [2n]} \gamma_j \Big\} \bigg) 
    \end{aligned}
    \end{equation}
    which is then the largest connected Pauli Lie algebra on $n$ qubits of polynomial size. In this case, these new terms arising from having the parity operator as a generator lead to dynamics allowing more independent dynamics between both fermionic parity operators, compared to the \textit{$A_I$} case. A Lie algebra of this class is discussed in Ref.\ \cite{chapman2020characterization}, as arising from some many-body systems like the Kitaev honeycomb model in two dimensions \cite{Kitaev_2006}.
    
    Moving on to the exponentially big Lie algebras,  the number of qubits $n$ can be given as a function of the number of legs of length 2. Then, for $\textit{B1}$ the Lie algebra corresponds to $\mathfrak{sp}(2^n)$, which describes the space of all Paulis on $n$ qubits (here $n=n_2+1$) satisfying the symplectic condition, i.e., given some $Q:Q^T=-Q$
    \begin{align}
             \mathfrak g=\mathrm{span}(\{ P\in \mathcal{P}| P^T Q+QP=0 \})\,.
    \end{align}
    This Lie algebra has recently been motivated, and its properties have been further studied in Ref.\ \cite{garcia2024architectures}. The Lie algebra given by $\textit{B2}$ instead, corresponds to $\mathfrak{so}(2^n)$, i.e., the span of all purely imaginary Paulis ($P^T=-P$) in the same Pauli basis on $n$ qubits (here $n=n_2+2$). Then, for $Q^T=Q$, we have
    \begin{align}
             \mathfrak g=\mathrm{span}(\{ P\in \mathcal{P}| P^TQ+QP=0 \}).
    \end{align}

    This subalgebra has been previously studied in relation to Haar random circuits and benchmarking tasks, when one needs to restrict to some subspace of the full unitary group, for instance, when working with certain fault-tolerant error correcting codes \cite{hashagen2018real,harper2019fault}. Moreover, both $\mathfrak{sp}(2^n)$ and $\mathfrak{so}(2^n)$ can arise when considering operators in commonly studied 1-dimensional spin models, such as the Kitaev chain or the $XY$-model under certain interaction fields \cite{wiersema2023classification}.
    For the last two cases, for $\textit{B3}_D$ we have $\mathfrak{su}(2^n)$, the full Lie algebra of trace-less Hermitian matrices on $n$ qubits ($n=n_2+2$) 
     \begin{align}
            \mathfrak g=\mathrm{span}(\mathcal{P}\backslash \{\mathbf 1\})\, .
    \end{align}
    On the other hand, $\textit{B3}_I$ is equivalent to all Hermitian matrices on $n-1$ qubits $\mathfrak{su}(2^{n-1})$ ($n=n_2+3$), which can be understood as the embedding of a complex vector space into a real one ($\mathbb{C}^d\simeq \RR^{2d}$) \cite{halmos93}. The Paulis take the form $P\mapsto i\mathrm{Im}(P)\otimes \mathbf{1}+\mathrm{Re}(P)\otimes Y$. Then
    \begin{equation}
    \begin{aligned}
            \mathfrak g=\mathrm{span}(\{ P\in \mathcal{P}\backslash{Q}\,|\, [Q,P]=0, P^TK+KP=0\}),\\
            \mathrm{where} \, Q^TK+KQ=0\,.
    \end{aligned}
    \end{equation}
    Note that all the definitions given above for the exponentially big Lie algebras are given in a Clifford invariant form, i.e., under Clifford transformations we just have $Q\mapsto Q'$ preserving the (anti-) symmetric property.
   
    \section{Disconnected Pauli Lie-algebras}
     \label{sec:disconnected_lie}
    Hitherto the discussion centered around connected anti-commutation graphs. However, one could also have a generating set with a disconnected anti-commutation graph. In that case, in general, different connected components are Lie algebraically independent and, therefore, their Lie-algebras are direct sums (in terms of their representations) of the Lie algebras of the subgraphs
    \begin{align}
\mathfrak{g}\sim\bigoplus_{i=1}^{n_d}\mathfrak{g_i}
    \end{align}
    where $n_d$ is the number of connected subgraphs and $g_i$ is the Lie algebra of a connected subgraph.
    If there is no algebraic dependence between their generators, 
    then they are Clifford equivalent to Lie algebras acting on the tensor product of their Hilbert spaces.
    In general, generators of different sub Lie-algebras can still have an algebraic dependence between them, creating unitarily inequivalent Lie-algebras.  
    For this to occur, we need to consider the non-trivial center of a Lie algebra
    \begin{align}
        Z(\mathfrak{g})=\{X\in A(\mathfrak{g})| \forall Y\in \mathfrak{g}: [X,Y]_L=0\}
    \end{align}
    where $A(\mathfrak{g})\supset \mathfrak{g}$ is the algebra generated by the elements of $\mathfrak{g}$. To have an algebraic dependence, two sub Lie algebras need to share their non-trivial center 
    ($Z(\mathfrak{g_i})\cap Z(\mathfrak{g_j})\neq \mathrm{span}(\mathbf 1)$). The non-trivial center of the whole 
    Lie algebra we call $Z(\mathfrak{g}):=
    \mathrm{span}(\{Z(\mathfrak{g_i})\}_{i\in [d]})$. The non-trivial center arises from two sources:
    \begin{itemize}
        \item The product of two Paulis corresponding to vertices of legs of length 1.
        \item If the graph has a non control symmetry. This is the case if $\mathfrak{g}$ does have an algebraic dependence but does not, i.e., the graph is of a structure admissible to $\textit{A}_D$ or $\textit{B3}_D$, but no algebraic dependence.
    \end{itemize}
     As such, the number of independent generators of the algebra of the center is given by
    \begin{align}
        n_{Z}(\mathfrak{g}_i)= &n_c(\mathfrak{g}_i)+n_s(\mathfrak{g}_i)
    \end{align}
    where $n_s\in\{0,1\}$ describes the presence of the additional symmetry.
    With this the total number of generators in the non-trivial center is
    \begin{align}
        n_{Z}(\mathfrak{g})\leq \sum_{i=1}^{n_d}n_{Z}(\mathfrak{g_i})\eqcolon \bar n_{Z}(\mathfrak{g})\,,
    \end{align}
    where the difference between both quantities $n_Z(\mathfrak{g})$ and $\bar n_{Z}(\mathfrak{g})$ comes from algebraic dependencies between disconnected components.
    In that case, it suffices to describe the algebraic relation between a chosen set of generators of the respective centers of $Z(\mathfrak{g_i})$ and the generators of $Z(\mathfrak{g})$ ($\mathcal{B}=\{P_1,\dots ,P_{n_{Z} }\}$).
     The full set of dependencies can therefore be described by a matrix $M \in \{0,1\}^{n_{Z},\bar n_{Z}}$ where $Q_j$, corresponding to a generator of a non-trivial center of a sub Lie algebra, is given by $Q_j:=\prod_{i\in[n_Z]}P_i^{M_{i,j}}$. 
      For connected anti-commutation graphs, there are only polynomially many (in the number of vertices) Clifford inequivalent Lie algebras, corresponding to the $6$ canonical types. 
      If we allow for disconnected graphs, due to the freedom to reuse controls and potential symmetries in the case of $\textit{A}$ and $\textit{B3}$, the number of Clifford inequivalent Pauli Lie algebras becomes exponentially big in general.
    
    \section{Discussion and outlook}
    \label{sec:conclusion}
    
    In this work, we have presented a full classification of Pauli Lie algebras, as well as a sketch on how to algorithmically compute them. 
    Having found a universal set of equivalence classes for graphs under conditioned complementation, our results ultimately serve as a rigorous proof showing that every Pauli Lie algebra is, up to direct sums, either free-fermionic or exponentially large, including all types $\mathfrak{sp}(2^{n})$, $\mathfrak{so}(2^{n})$ and $\mathfrak{su}(2^{n})$. Additionally, we find an interpretation for direct sums in terms of the physical Hilbert space, as either tensor products, or as actual direct sums coming from gates acting as controlled operations. These considerations regarding the Hilbert space also reveal the existence of 6 Clifford inequivalent Lie algebras. In particular, we show that for $\mathfrak{so}(2^n)$ and $\mathfrak{sp}(2^n)$, 
    one can always find a Clifford mapping any generating set to another. In contrast, for the free-fermionic and $\mathfrak{su}(2^n)$, one can have two Clifford inequivalent versions. In the former case they are related through inclusion of the fermionic parity operator, and in the latter, by the embedding of the Lie algebra into a larger Hilbert space.
  
    We believe our results can be taken and interpreted in basically two ways. On the one hand, one can see them as a no-go result for the existence of sub-exponential Pauli Lie algebras other than the mentioned free-fermionic ones. 
    Free fermions been used as toy models and studied in a wide range of fields, including quantum machine learning and variational methods~\cite{bittel2021training,diaz2023showcasing}, as well as simulation techniques of quantum systems \cite{terhal2002classical,jozsa2008matchgates,mocherla2023extending}, and hence their behavior is well understood. This is relevant as our results limit any further avenues for methods within these fields relying on polynomially sized Pauli Lie algebras, like is the case for $\mathfrak{g}-sim$, geometric QML  and others.
   
    While our result only holds for Pauli generators, it would be interesting to study whether these restrictions hold for more general operators, e.g.,  when considering linear combinations of Paulis. In recent work, for instance, similar graph tools have been used to identify free-fermionic Hamiltonians given by arbitrary sums of Paulis \cite{chapman2020characterization, chapman2023unified}. It would be relevant to find out if this could be extended to identify all possible Lie algebras in a similar way as we do here.
   
    On a more positive note, our work can be viewed as a new framework for efficiently computing Lie algebraic criteria for universality and controllability of Pauli gate sets. Our techniques not only provide an answer to this, but also elucidate a clear way to extend some given operator pool to transition from one Lie algebra to another, offering valuable insights into the structure and capabilities of these gate sets. We believe that this can 
    be used to study properties of quantum circuits when doped with some additional set of gates outside their Lie algebra.
    
    To summarize, efficient and generic tools for classifying and characterizing Lie algebras are crucial in quantum information theory. This work then provides a comprehensive classification of Pauli Lie algebras without any additional assumption on the generators.  In doing so, we highlight the limitations that some quantum machine learning and simulation methods might face,  while also offering a new tool to address questions regarding reachability and equivalence of gate sets generated by Pauli operators. We believe this work can help in assessing questions that might have remained elusive to date, as well as motivate new algebraic methods in order to deal with other families of operators.
    
    \section*{Acknowledgments}
    We thank Lorenzo Leone, Greg White and Hakop Pashayan
 for useful discussions. 
This work has been supported by the BMBF (FermiQP, MuniQCAtoms, DAQC), the 
Munich Quantum Valley (K-8),
the Quantum Flagship (PasQuans2, Millenion),
the DFG (CRC 183), the Einstein Research Unit, 
Berlin Quantum, and the ERC (DebuQC).
 
\bibliography{bibliography.bib}

\onecolumngrid
\newpage
\appendix

\section{Preliminaries}
\label{app:preliminaries}
 In this section, we will  introduce some basic notions about the spaces and objects we work with.

 \begin{definition}[Lie algebra]
       We define a Lie algebra $\mathfrak{g}$ as a vector space over some field $K$ endowed with a  Lie product ($[\cdot,\cdot]_L:\mathfrak{g}\times\mathfrak{g}\rightarrow\mathfrak{g}$), under which it is closed, i.e., $\forall a,b\in \mathfrak{g}: [a,b]_L\in \mathfrak{g}$, such that the product satisfies
        \begin{itemize}
            \item Bilinearity.
            \item Anti-symmetry: $\comm{a}{b}_L=-\comm{b}{a}_L$.
            \item Jacobi identity: $\comm{a}{\comm{b}{c}_L}_L+\comm{b}{\comm{c}{a}_L}_L+\comm{c}{\comm{a}{b}_L}_L$=0.
        \end{itemize}
    \end{definition}

In our case we will focus on the space of $N \times N$ complex matrices  \( \text{Mat}_{N}(\mathbb{C}) \), where we will take the Lie product to be given by the complex commutator of two matrices, i.e., $\comm{A}{B}_L=i\comm{A}{B}=i(A\cdot B-B\cdot A)$. This Lie algebra is denoted by $\mathfrak{gl}(N,\mathbb{C})$. If one would remove the last two requirements from the previous definition, and simply impose bilinearity on the product, that would define an \emph{algebra} instead. In many cases, we will be interested in looking at products of elements in $\mathfrak{gl}(N,\mathbb{C})$, which might not be in the Lie algebra, but in their algebra.

In this work, we will have a particular focus on the 
\emph{Pauli group}, which is a subgroup of \( \text{Mat}_{2^n}(\mathbb{C}) \), defined as
\begin{equation}
    \mathcal{P}_n = \left \{ \left. e^{i\phi\pi/2}\sigma_{i_1}\otimes\dots\otimes\sigma_{i_n} \right| \phi\in\{0,1,2,3\},i_k\in\{0,1,2,3\} \right \}
\end{equation}
for 
\begin{equation}
    \sigma_0=\mathbf{1}=\begin{pmatrix}
        1 & 0\\
        0 & 1
    \end{pmatrix},
    \hspace{0.5cm}
    \sigma_1=X=\begin{pmatrix}
        0 & 1\\
        1 & 0
    \end{pmatrix},
    \hspace{0.5cm}
    \sigma_2=Y=\begin{pmatrix}
        0 & -i\\
        i & 0
    \end{pmatrix},
    \hspace{0.5cm}
    \sigma_3=Z=\begin{pmatrix}
        1 & 0\\
        0 & -1
    \end{pmatrix}.
    \hspace{0.5cm}
\end{equation}
Another particularly interesting group that will be discussed over this work is, in fact, the \emph{Clifford group}, which is the \textit{normalizer} of $\mathcal{P}_n$
given by
\begin{equation}
    \mathcal{C}_n = \left\{ C\in U(2^n)|CPC^{\dagger}\in\mathcal{P}_n \right\},
\end{equation}
i.e., the set of all unitary operations on $n$ qubits that 
leaves the Pauli group invariant. It will be interesting because given two elements of $\mathcal{P}_n$, their Lie product will be preserved under Clifford operations, and thus we will be able to transform between sets without changing the underlying Lie algebra.

There are many properties of $\mathcal{P}_n$ which make Paulis interesting, one being that they form a basis for the space of all operators on $n$ qubits. In particular, if one restricts to coefficients over $\mathbb{R}$, they span the set of all Hermitian operators. When one excludes the $\sigma_0^{\otimes n}$ term, this gives a basis for all (Hermitian) traceless operators. 
Pauli commutation and anti-commutation rules are particularly interesting, given that for two Paulis $P_1,P_2\in\mathcal{P}_n$, $P_1P_2=\pm P_2P_1$ and hence, they either commute or anti-commute. These properties will be instrumental for this work. 

It is easy to check that the hermicity and traceless properties are preserved under the Lie product and hence, these two conditions define an actual sub-algebra of $\mathfrak{gl}(N,\mathbb{C})$, namely
\begin{equation}
     \mathfrak{su}(N) \equiv \{P \in \text{Mat}_{N}(\mathbb{C}) \mid P^\dagger = P, \, \text{Tr}(P) = 0\}
\end{equation}
i.e., the Lie algebra spanned by all non-identity Paulis over real coefficients. If one would add $\mathbf{1}$ to this set then the Lie algebra is called $\mathfrak{u}(N)$. In our case, we will be interested in $\mathfrak{su}(N)$ and sub-algebras thereof. One such sub-algebra can be constructed by restricting to anti-symmetric matrices, which for Hermitian matrices (of which Pauli matrices is a subset of) is equivalent to purely imaginary ones
\begin{equation}
    \mathfrak{so}(N) \equiv \left\{P\in\mathfrak{su}(N)|P^T=-P \right\} \subset 
    \mathfrak{su}(N).
\end{equation}
On the other hand, for even $N$, one could restrict to the matrices in $\mathfrak{su}(N)$ satisfying the equation
\begin{equation}
    P^T\Omega=-\Omega P
\end{equation}
for $\Omega$ some non-degenerate, anti-symmetric bilinear form. Typically, however, $\Omega$ is defined as
\begin{equation}
    \Omega \equiv \begin{pmatrix}
        0 & \mathbf{1}_N \\
        -\mathbf{1}_N & 0
    \end{pmatrix} = iY_0\otimes\mathbf{1}_N
\end{equation}
which then gives a canonical definition of the symplectic Lie algebra
\begin{equation}
    \mathfrak{sp} \left( \frac{N}{2} \right) \equiv \left\{ P\in\mathfrak{su}(N)|P^TY_0=-Y_0 P \right\} \subset \mathfrak{su}(N).
\end{equation}
One particular construction of a Lie algebra which we will be interested in is when this is defined with respect to some generating set.
\begin{definition}[Generating set]
    We say that some set of Paulis $\mathcal{G}=\{P_1,\dots,P_{n_G}\}$ is a generating set of some Lie algebra $\mathfrak{g}$ if they span all elements in $\mathfrak{g}$ through Lie algebraic operations. Then we write $ \langle \mathcal G\rangle_{\RR,[\cdot,\cdot]}=\mathfrak{g}$. A generating set is minimal, if removing any $P_i$ would change the size of the Lie algebra. 
\end{definition}

When some Lie algebra $\mathfrak{g}$ is defined as being generated by some set $\mathcal{G}$, this is sometimes referred as the Dynamical Lie algebra of $\mathcal{G}$.

One of the main questions of interest regarding Lie algebras is their classification. When a Lie algebra $\mathfrak{g}$ is defined as being generated by a set $\mathcal{G}$ one would also be interested in methods that efficiently characterize these spaces. When dealing with Paulis, and hence sub-algebras of $\mathfrak{su}(2^n)$, certain results exist. In particular, when one considers sub-algebras of $\mathfrak{u}(2^n)$ (i.e., including $\sigma_0^{\otimes n}$), a classification in terms of direct sums of $\mathfrak{su},\mathfrak{sp},\mathfrak{so},$ plus some \textit{Abelian} component has been developed. These sub-algebras are usually called \textit{reductive}, and in particular \textit{semisimple} when the Abelian component is null. While this is a useful result that can help characterize some of these Lie algebras, it is in general practically unfeasible to find such a decomposition, given a set of generators, and thus the task remains difficult in spite of these sparse results.

In this work, we will sometimes refer to the center of a Lie algebra, $Z(\mathfrak{g})$ as the set of operators commuting with every element in $\mathfrak{g}$. If there is some Abelian component in the decomposition of $\mathfrak{g}$ it will be a subset of it. If we think of $\mathfrak{g}$ as a sub-algebra of some bigger space, we can also have some elements $z\in Z(\mathfrak{g})$ such that $z\notin \mathfrak{g}$, but potentially in the algebra.
We refer to Ref.\ \cite{knapp1996lie} for a more in depth discussion about these objects and their properties. 

Finally, a particular class of interesting systems that will repeatedly appear in this work are free-fermionic systems. These are many-body systems consisting of a set of non-interacting particles. In our context, they are particularly interesting for two reasons. Firstly, they can be described by a set of Hermitian operators with similar commutation structure to Pauli matrices, namely Majorana fermion modes $\{\gamma_i\}_{i\in[2n]}$, which satisfy

\begin{equation}
    \{\gamma_i,\gamma_j\}=2\delta_{i,j}\mathbf{1}.
\end{equation}
As such, there are several ways one can map these operators to Paulis, one of the most common ways being through the Jordan-Wigner transformation, which maps
\begin{equation}
    \gamma_{2j-1}=\Pi_{k=1}^{j-1}Z_k\otimes X_j, \hspace{1cm} \gamma_{2j}=\Pi_{k=1}^{j-1}Z_k\otimes Y_j.
\end{equation}
Secondly, they are known to be solvable by classical methods, and, in fact, are one of the only known systems to span a polynomially large Lie algebra. To see this, it can be shown that such free-fermionic Hamiltonians can be written as quadratic in the Majorana modes
\begin{equation}
    H=2i\sum_{i,j\in[2n]}h_{i,j}\gamma_i\gamma_j\eqqcolon i\gamma\cdot h\cdot \gamma^T
\end{equation}
with $\gamma$ a vector of the Majorana operators, and $h$ a real anti-symmetric coefficient matrix. Due to the canonical anti-commutation relations,
\begin{equation}
    \comm{\gamma\cdot h\cdot\gamma^T}{\gamma_i} = -4(h\cdot\gamma^T)_i
\end{equation}
for some individual mode $\gamma_i$. Hence, these evolve under $H$ as
\begin{equation}
    e^{iHt}\gamma_ie^{-iHt}=\sum_j (e^{4ht})_{i,j}\gamma_j.
\end{equation}
Since $h$ is anti-symmetric and real, $e^{4ht}\in SO(2n,\mathbb{R})$, and  there $\exists W\in SO(2n,\mathbb{R})$ s.t.
\begin{equation}
    W^T\cdot h\cdot W  = \bigotimes_{j=1}^n\begin{pmatrix}
        0 & \lambda_j\\
        -\lambda_j & 0
    \end{pmatrix}\,
\end{equation}
with real $\{\lambda_j\}$. The special orthogonal group
$SO(2n,\mathbb{R})$
preserves the parity of fermion number.
$W$ itself can be represented as the exponential of a quadratic Majorana fermion operator as well, so that $W\eqcolon e^{4\omega}$. Finally, the Hamiltonian $H$ can be solved by exact diagonalization as 
\begin{equation}
    e^{-\gamma\cdot w\cdot\gamma^T}He^{\gamma\cdot w\cdot\gamma^T}=2\sum_{j=1}^n\lambda_j\gamma_{2j-1}\gamma_{2j}=2\sum_{j=1}^n\lambda_j Z_j.
\end{equation}

\section{Graph equivalences }
\label{app:graph_equivalences}
In this section,  we prove \Cref{thm:types} and introduce some of the tools required for it. Since our goal is to classify Pauli Lie algebras in all generality, we cannot rely on constructive arguments like in Ref.\ 
\cite{wiersema2023classification}, where locality constraints make explicit computation of $\mathfrak{g}$ feasible. Instead, we map the problem to a graph reduction problem and find all equivalence classes under contraction operations as defined in the main text. Recall that contractions correspond to performing a shuffle of the set of generators on the Lie algebra level, which on the graph will act as a conditioned local complementation. In order to facilitate discussion of sequences of contractions on the graph we will introduce the following tool verbatim to Ref.\ \cite{gintz2018classifying}.

\begin{definition}[Lightning $\chi$] Given a graph $\Gamma=(\mathcal{V},\mathcal{E})$, and a vertex $V\in \mathcal{V}$ onto which we wish to perform some contractions, we define the lightning $\chi=\Gamma(V)$ with respect to V as the induced sub-graph $\Gamma\setminus \{V\}$ with labeled vertices: \textit{lit} (\textit{unlit}) if on $\Gamma$ they are connected to V (if on $\Gamma$ they are not connected to V). We then say we \textit{toggle} a lit vertex $\omega$ whenever we perform a contraction of $\omega$ onto $V$. Since this operation on $\Gamma$ complements edges $(n_{\omega},V)$, for every $ n_{\omega}$ in the neighbourhood of $\omega$, in terms of the lightning, it changes the \textit{lit / unlit} state of all $n_{\omega}$. 
\end{definition}

Graphically we will represent lit vertices as white vertices and unlit ones as black vertices.
We show in an example of a lightning in \Cref{fig:lightning}.
\begin{figure}[h]
    \centering
    \includegraphics[scale=0.15]{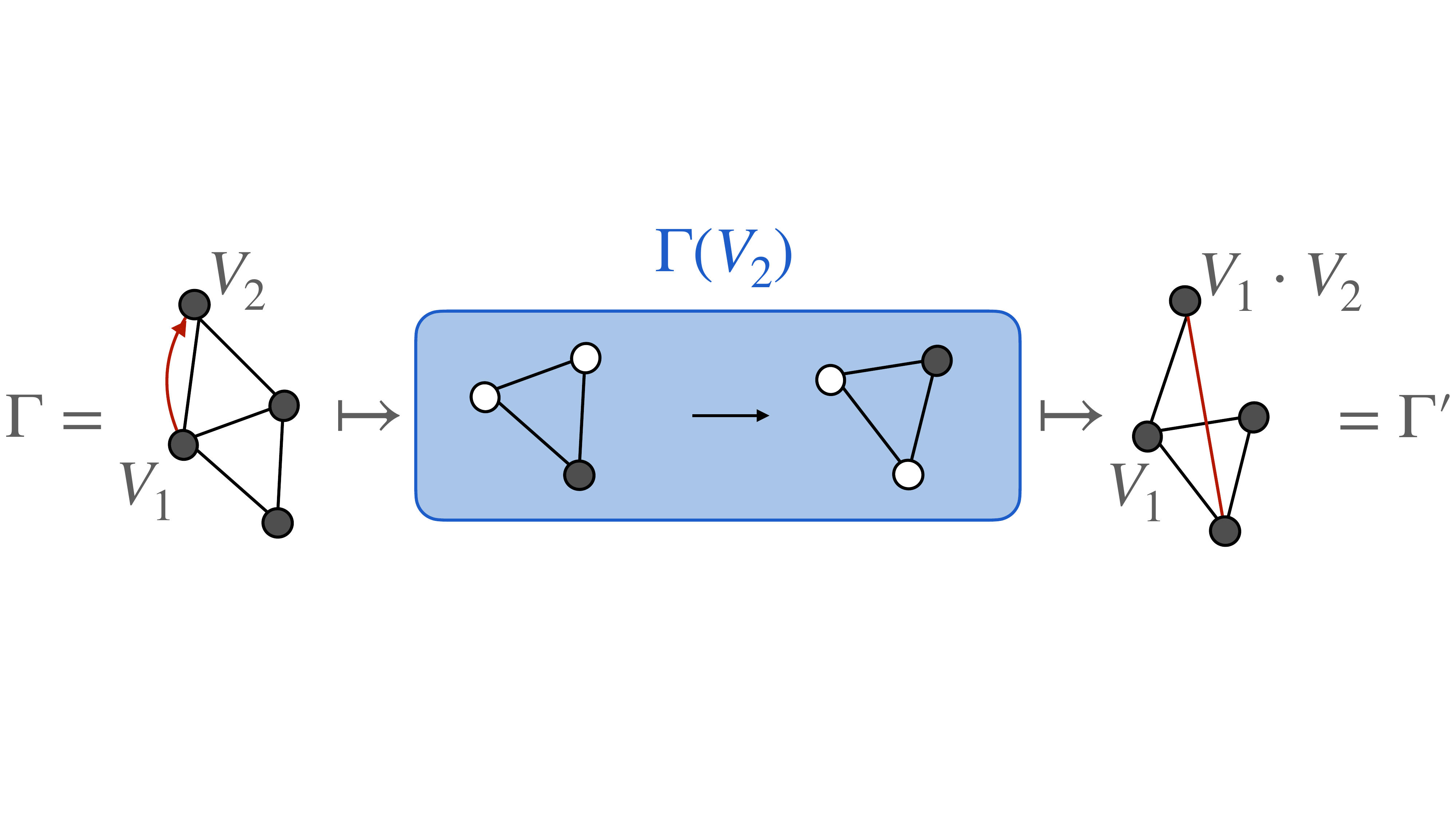}
    \caption{An example of how lightnings can be used to visualize contractions. Contracting $V_1$ onto $V_2$ can be seen as taking the lightning $\Gamma(V_2)$, and toggling $V_1$. This changes the lit / unlit satate of $V_1$'s neighbours, which changes the connectivity of $\Gamma$ as expected.}
    \label{fig:lightning}
\end{figure}

With this, our approach will be tightly related to the \emph{lit-only $\sigma$-game}, where given two lightnings on a graph, the goal is to determine equivalence through some adequate sequence of toggles \cite{goldwasser_does_2009,wang_minimum_2007,sjostrand2023orbits}. Big part of our result will then be based on finding equivalent lightnings for arbitrary initial ones on our set of canonical graphs.

From now on, we will say a vertex $V$ is central whenever $deg(V)\geq 3$. Typically, in our graphs we will only have one such vertex which we will call $O$. Then, we will refer to every connected component of $\Gamma\setminus\{O\}$ as a leg of $O$. Given a leg $L$, we will use $L_i$ for the vertex in $L$ at distance $i$ from $O$.

With this, we can now move on to proving our statement. To that end, we will introduce first two lemmas that will be instrumental.

\begin{lemma}
    \label{lemma:light_even_legs}
    Given a lightning on a star graph $\Gamma$ (i.e., only one central vertex) and at least one leg of length 1, where the central vertex $O$ is lit. Take two legs $L$ and $M$ of length $l_L\in\{2,3,4\}$ and $l_M=2$ respectively. Then, as long as $L_2$ and $M_2$ are initially unlit, there is a sequence of toggles that inverts the state of $L_1,M_1$ and $L_3$ if exists, while leaving the rest of the lightning invariant.
\end{lemma}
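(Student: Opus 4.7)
The plan is to prove the lemma by direct construction: produce an explicit sequence of toggles realising the required net flip pattern, and verify the lit-only condition at each step. The argument splits naturally into a parity analysis followed by a scheduling argument.

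First, I would perform a parity analysis. Writing $t(v)\in\mathds{F}_{2}$ for the parity of the number of times vertex $v$ is toggled, the net flip of the lit state at $u$ equals $\sum_{v\sim u} t(v)\pmod{2}$. Imposing the prescribed flip pattern together with the invariance of the length-one leg $S_{1}$ (which forces $t(O)=0$) pins down the minimal parity $t(L_{2})=t(M_{2})=1$ with $t(v)=0$ for every other vertex, uniformly for all $l_{L}\in\{2,3,4\}$: when $l_{L}\geq 3$ a single toggle of $L_{2}$ automatically flips both $L_{1}$ and $L_{3}$, and when $l_{L}=4$ the vertex $L_{4}$ is left untouched because it is not a neighbour of any toggled vertex. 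Any valid sequence must realise this multiset of toggles up to an even number of additional ``wasted'' toggles.

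Next, I would schedule these toggles so that the lit-only rule is obeyed at every step. The core obstacle is that $L_{2}$ and $M_{2}$ are initially unlit, so they can be toggled only after an auxiliary toggle of their inner neighbour ($L_{1}$ or $M_{1}$, respectively) first lights them. Each such auxiliary toggle flips $O$ to unlit, and the subsequent $L_{2}$/$M_{2}$ toggle flips $L_{1}$/$M_{1}$ to unlit, blocking the further toggles that would otherwise restore $L_{2},M_{2}$. The deadlock is resolved by a central toggle of $O$, which simultaneously re-lights every first-leg vertex; a second $O$-toggle at the very end preserves $t(O)=0$, and the toggles of the length-one leg $S_{1}$ are matched up to preserve $t(S_{1})=0$.

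For the representative initial configuration with $L_{1},M_{1},S_{1}$ all lit, the eight-step sequence
\begin{equation*}
    L_{1},\ L_{2},\ M_{1},\ M_{2},\ O,\ L_{1},\ M_{1},\ O
\end{equation*}
is lit-only legal at every step and realises the required net flip simultaneously for all $l_{L}\in\{2,3,4\}$, as one verifies by tracking the state of each vertex. The remaining initial configurations of $L_{1},M_{1},S_{1}$ would then be reduced to this base case by prefixing a short preparation, such as a toggle of $O$ (permitted because $O$ is lit by hypothesis), that lights any initially unlit first-leg vertex; a matching cleanup pair at the end absorbs the extra toggles without altering the net parity.

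The hardest part is the bookkeeping for the finite but delicate case analysis on the initial lit configurations of $L_{1},M_{1},S_{1}$: one must check that in every case a short valid sequence exists in which every toggled vertex is lit at its turn. Once the explicit sequences are written down, both the lit-only condition and the final parity are mechanical to verify.
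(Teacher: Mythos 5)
Your parity analysis and your base-case sequence are both correct: $L_1, L_2, M_1, M_2, O, L_1, M_1, O$ is lit-only legal when $L_1$ and $M_1$ start lit, and it realises exactly the required net flip of $L_1$, $M_1$ and $L_3$; it is the same length as, and arguably cleaner than, the paper's sequence $O, M_1, L_1, O, M_2, M_1, L_2, L_1$ for the configuration where $L_1$ and $M_1$ agree. Sandwiching your base sequence between two toggles of $O$ likewise handles the configuration where both are unlit, and since your sequences never toggle the length-one leg, its initial state is indeed irrelevant to their legality. So far this matches the paper's strategy of explicit scheduling guided by the forced toggle parities $t(L_2)=t(M_2)=1$.

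The gap is the mixed case, $L_1$ lit and $M_1$ unlit (or vice versa). The only concrete preparation you name --- ``a toggle of $O$ \dots that lights any initially unlit first-leg vertex'' --- provably cannot reach your base configuration there: toggling $O$ flips $L_1$ and $M_1$ simultaneously, so it preserves the fact that they disagree, and no prefix/suffix of $O$-toggles changes this. You have deferred exactly the step that carries the lemma's content, and it is precisely here that the hypothesis ``at least one leg of length $1$'' earns its keep: the paper resolves the mixed case with the ten-step sequence $L_1, L_2, \omega, O, L_1, M_1, O, M_2, M_1, \omega$, in which the length-one leg $\omega$ is toggled twice to re-light $O$ at the moments when the toggles of $L_1$ and $M_1$ have extinguished it, without disturbing any deeper vertex. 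Your argument never toggles $S_1$ at all, so as written it cannot produce this case; to complete the proof you must either exhibit an $\omega$-based sequence of this kind for the mixed case or prove (which I doubt is possible in general) that the disagreement can be removed without ever toggling a leg of length one.
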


\begin{proof}
    Assume first that both $L_1$ and $M_1$ are in the same intial state, say for now that both are unlit. Then we just need to toggle 
    \begin{equation}
        O,M_1,L_1,O,M_{2\rightarrow 1},L_{2\rightarrow 1}.
    \end{equation}
    If they are both initially lit then toggling according to the same sequence in reverse order unlights them both. Besides this, all other vertices remain unchanged, except for $L_3$ if it exists since we toggled $L_2$ only once, and hence $L_3$ is flipped.

    On the other hand, $wlog$ we can assume $L_1$ is lit and $M_1$ is unlit. Then, for $\omega$ a leg of length 1, which for now we assume is lit, we toggle according to

    \begin{equation*}
        L_1,L_2,\omega,O,L_1,M_1,O,M_2,M_1,\omega.
    \end{equation*}
    After this, $L_1$ is unlit, $M_1$ is lit and since we toggled $O$ an even number of times, the rest of the lightning remains the same, except again for $L_3$ potentially. For the previous sequence, we assume that $\omega$ is initially lit. If it was not, we just need to toggle $O$ once before, apply the same sequence, and toggle $O$ once more at the end of it.
\end{proof}

With this we can prove the following lemma about equivalence between graphs with one single center and our canonical graphs.

\begin{lemma}
    
    \label{lemma:shortening}

    Given a star graph $\Gamma$ (i.e., only one central vertex) with at least one leg of length 1, and any number of legs of arbitrary length, they can always be transformed by contractions into one of our canonical graphs.
\end{lemma}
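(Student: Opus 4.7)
The plan is to prove Lemma~\ref{lemma:shortening} by strong induction on $|V(\Gamma)|$, using Lemma~\ref{lemma:light_even_legs} as the main tool for lightning manipulation around the central vertex $O$. The first step is to establish a dichotomy on the multiset of leg lengths $\{\ell_L\}$: if $\Gamma$ has at most one leg of length $\geq 2$, then $\Gamma$ is already of Type~\textit{A}, with the central vertex $O$ playing the role of the second-to-last vertex of the line. The interesting case is therefore when at least two legs of length $\geq 2$ are present; here we must reduce to Type~\textit{B} by shortening every leg to length at most $4$ and consolidating multiple legs of length $3$ or $4$.

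\emph{Shortening a long leg.} For a leg $L$ of length $\ell\ge 5$, the plan is to build a composite move that strictly reduces $\ell$ while restoring the star structure. The natural single-step contraction inside~$L$, contracting $L_{i+1}$ onto $L_i$, produces $L_i'=\pm i L_i L_{i+1}$, which becomes adjacent to $L_{i-1},L_{i+1},L_{i+2}$ and thus creates a triangle $\{L_i',L_{i+1},L_{i+2}\}$. I would then resolve this triangle by a second contraction involving $O$ and the pendant $\omega$ (possibly after a toggle sequence from Lemma~\ref{lemma:light_even_legs} used to prepare the appropriate lit pattern), returning to a star whose main leg has been shortened by one or two and which has gained at most one new length-$1$ pendant. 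Iterating collapses every leg to length at most~$4$.

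\emph{Consolidating multiple long legs.} Once $\max_L \ell_L \le 4$, I would use Lemma~\ref{lemma:light_even_legs} to flip selected pairs $(L_1,M_1)$ (and an $L_3$ when $L$ has length $3$ or $4$) at~$O$ without disturbing the rest of the lighting, so that contractions at~$O$ can combine pairs of length-$3$ or length-$4$ legs into a single length-$4$ leg plus additional length-$2$ or length-$1$ legs. Iterating this move reduces the configuration until at most one leg of length~$3$ or length~$4$ remains, with the two never coexisting, matching canonical Types~\textit{B1}, \textit{B2}, \textit{B3}.

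\emph{Main obstacle.} The hardest step is the verification of the shortening move: contractions inside a leg unavoidably create triangles or branch points, temporarily breaking the star property, and the proof must interleave them with carefully designed toggle sequences from Lemma~\ref{lemma:light_even_legs} to restore a pure star at the end of each round. Closing the induction therefore requires a clean monotonic invariant, for example the lexicographic order on the sorted multiset $(\ell_L)_L$ of leg lengths, and verifying that every composite reduction strictly decreases it, so that the procedure terminates in a canonical form.
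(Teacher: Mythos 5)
Your high-level plan (dichotomy into Type \textit{A} versus the rest, then reduce every leg to length at most $4$, then consolidate the legs of length $3$ and $4$) matches the paper's strategy, but the central technical step --- the shortening move --- has a genuine gap. You correctly observe that contracting $L_{i+1}$ onto $L_i$ inside a leg creates the triangle $\{L_i',L_{i+1},L_{i+2}\}$, but your proposed repair, ``a second contraction involving $O$ and the pendant $\omega$,'' cannot work: a contraction of $Q$ onto $P$ only complements edges incident to $P$, so no contraction involving only $O$ and $\omega$ can touch the edge $(L_i',L_{i+2})$ sitting deep inside the leg. Toggle sequences from \Cref{lemma:light_even_legs} do not help here either, since that lemma is stated for star graphs with the relevant second vertices unlit, and a lightning toggle never alters edges within the induced subgraph $\Gamma\setminus\{V\}$ --- it only re-routes the attachment of the distinguished vertex $V$. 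The paper's proof exploits exactly this last fact to avoid ever breaking the star: for a leg of length $l>4$ it fixes the lightning $\chi=\Gamma(L_5)$, so that the star $\Gamma\setminus\{L_5\}$ (with $L$ truncated at $L_4$, plus the detached tail $L_6,\dots,L_l$) is invariant throughout, and an explicit toggle sequence moves the single lit vertex from $L_4$ to $O$. The effect is not to shorten the leg in place by one or two, but to cut it into a length-$4$ leg and a length-$(l-4)$ leg both attached to $O$; iterating this is what brings the maximum leg length down to $4$. Without this re-attachment idea (or a concrete, verified triangle-resolving sequence), your induction has no valid elementary move, and the lexicographic monovariant has nothing to measure.

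Two smaller issues: your consolidation step claims to ``combine pairs of length-$3$ or length-$4$ legs into a single length-$4$ leg,'' which is not what happens and would land you in the wrong canonical class --- the paper's moves turn one of two length-$3$ legs into a length-$2$ plus a length-$1$ leg, and break length-$4$ legs (in pairs, or against a length-$3$ leg) into pairs of length-$2$ legs, with the parity of the number of length-$4$ legs deciding between \textit{B1} and \textit{B2}. Moreover, the paper's length-$4$ consolidation passes through an intermediate graph with a branch point ($\deg(M_2)=3$) that must then be migrated to $O$ by further contractions and an application of \Cref{lemma:light_even_legs}; your sketch does not account for branch points at all.
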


\begin{proof}

    Let us call $O$ the central vertex, and $\omega$ one leg of length 1 (we ask for $\Gamma$ to have at least one such leg). First of all, if the central vertex $O$ has $N_L$ legs, of which at least $N_L-1$ are of length 1, then $\Gamma$ is of type $A$.

    Otherwise, there are at least two legs of length $l\geq 2$. In that scenario, 
    if there is a leg $L$ of length $l>4$ we can transform it into a leg of length 4 and one of length $l-4$. To see this, take a lightning $\chi=\Gamma(L_5)$. Call $M$ the other leg of length bigger than 1, and $P$ any other neighbour of O (which must exists since O has degree at least 3). Then toggle 
    \begin{equation}
        L_4,L_3,L_2,L_1,O,M_1,P,O,L_1,M_2,M_1,O,L_2,L_1,L_3,L_2,L_4,L_3,P,O,L_1,L_2,M_1,M_2,O,L_1,M_1,O.
    \end{equation}
    This leaves only the $O$ vertex lit, i.e., we removed the edge between $L_4$ and $L_5$, and created an edge between O and $L_5$.
    Repeating this we reduce the maximum length of a leg in the graph to 4.

    Moreover, if $\Gamma$ has several legs of length 3, it can always be transformed into a graph with only one leg of length 3. Let $L,M$ be two such legs. Then take a lightning $\chi=\Gamma(L_3)$ and toggle according to
    \begin{equation}
    L_2,L_1,O,M_1,M_2,M_3,\omega
    \end{equation}
    applying \Cref{lemma:light_even_legs} to $L$ and $M$ lights $L_1,M_1$ and unlights $L_3$. Toggling $O$ once more unlights all vertices but itself. Hence, we managed to transform $L$ into a leg of length 2 and a leg of length 1. This procedure can be repeated until only one leg of length 3 is left.

    Finally, we will prove that legs of length 4 can be transformed into legs of length 2 under certain conditions. On the one hand, if our graph has no leg of length 3, then let $L$ and $M$ be two of the legs of length 4. Then take a lightning $\chi=\Gamma(L_3)$ and toggle according to 
    \begin{equation}
        L_2,L_1,O,M_1,M_2,M_3,M_4,\omega,O,L_1,L_2,M_1,O,\omega,L_1,M_4,M_3,O,M_1,M_2.
    \end{equation}
    This transformation results in $M_2$ having $deg(M_2)=3$ and two legs of length 2, which we call $P$ and $Q$. Now, through an adequate sequence of contractions we will remove these legs from $M_2$ and append them to $O$. In particular, we contract
    \begin{align}
            M_2 &\mapsto \comm{M_2}{M_1}, \\
            O &\mapsto \comm{O}{M_2}.
    \end{align}
    In particular, this creates an edge between $O$ and $P_1$, and $O$ and $Q_1$. We can now take a lightning $\chi=\Gamma(M_2)$, where $O,P_1$ and $Q_1$ are all lit. Applying \Cref{lemma:light_even_legs}, onto $P$ and $Q$, unlights them without changing the rest of the lightning. This just disconnected $P$ and $Q$ from $M_2$, and connects them to $O$. 
    At the end of these transformations, $L$ and $M$ have both been broken down into two legs of length 2 each. This transformation can be repeated for every pair of legs of length 4. After this transformation we will always have reduced our $\Gamma$ to a graph either in $B1$ or $B2$, depending on the parity of legs of length 4.
    On the other hand, if $\Gamma$ has one leg of length 3, then let $M$ be the leg of length 3 and $L$ a leg of length 4. We can take a lightning $\chi=\Gamma(L_3)$ and toggle according to 
    \begin{equation}
        L_2,L_1,O,M_1,M_2,M_3,\omega,O,L_1,L_2,M_1,O,\omega,L_1,M_3,\omega,M_3,M_2,M_1,O.
    \end{equation}
    This breaks down $L$ into two legs of length 2. Repeating this for every leg of length 4, results in a graph $B3$.
 \end{proof}

By making use of these two lemmas, we will be able to first prove that any lightning on one of our canonical graphs, is equivalent to a lightning where only one vertex is lit. From this it will follow that there will always be a sequence of contractions mapping back to one of the canonical graphs.

\begin{theorem}[Equivalence of lightnings]
    \label{thm:1_lit}
    Given a lightning $\chi$ on some graph $\Gamma$ of a canonical type, it is always equivalent to some lightning where only one vertex is lit.
\end{theorem}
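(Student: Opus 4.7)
The plan is to exhibit, for each of the four canonical graph types and each choice of the vertex $V$ defining the lightning, an explicit sequence of toggles that reduces any initial labeling to one with a single lit vertex. The workhorse will be \Cref{lemma:light_even_legs}, which already lets us swap lit states on paired legs of length $\{2,3,4\}$ while leaving the rest of the lightning invariant, together with small variants of the contraction sequences used in the proof of \Cref{lemma:shortening}.

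I would first dispose of type \textit{A} (a path with singleton pendants on the second-to-last vertex). Here the structure is essentially one-dimensional, so I sweep lit states inward along the path: toggling a lit endpoint unlights it and flips only its unique interior neighbor, and proceeding greedily toward the cluster of pendants leaves only a local configuration to clean up. The pendants can then be consolidated into a single lit vertex by toggling them one by one, since toggling a lit pendant flips only the shared central vertex.

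For the star types \textit{B1}, \textit{B2}, \textit{B3}, I would split according to whether the center $O$ is lit. If $O$ is lit, I would pair off legs of length $\geq 2$ that carry lit deep vertices and invoke \Cref{lemma:light_even_legs} to cancel lit states in pairs at depth $1$, $2$, and (for \textit{B3}) $3$; the remaining length-$1$ pendants can then be collapsed by successive toggles, with the parity of $O$ used as a bookkeeping bit. If $O$ is unlit, I would first force it lit by toggling a single lit pendant or, if none exists, by toggling the shallowest lit vertex on a leg and propagating inward, reducing to the previous case.

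The hard part will be the bookkeeping. \Cref{lemma:light_even_legs} requires an auxiliary leg of length $2$ whose vertex $M_2$ is unlit, and in an arbitrary starting lightning those hypotheses may fail simultaneously on every candidate leg. I therefore expect to insert a preprocessing pass that drives the $L_2$-type vertices into the unlit state (using the tail of each leg as local scratch space, toggling from the deepest lit vertex outward) before the pairwise cancellation is applied. The subtlest case will be \textit{B3} when the unique length-$3$ leg and a length-$4$ leg are both lit at depth, since the natural auxiliary leg for \Cref{lemma:light_even_legs} may coincide with a leg carrying lit states; I anticipate needing a short dedicated sub-lemma for this configuration, essentially adapting the length-reducing sequence from \Cref{lemma:shortening} to operate on lightnings rather than on edges, after which the main induction closes.
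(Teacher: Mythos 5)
There is a genuine gap, and it starts with the toggle rule itself. In the lit-only $\sigma$-game as defined here, toggling a lit vertex $\omega$ flips the lit state of all \emph{neighbours} of $\omega$ but leaves $\omega$ itself unchanged (the edge $(V,\omega)$ would be complemented only if $(\omega,\omega)\in\mathcal{E}$, which never happens). Your type-\textit{A} sweep relies on the opposite convention: ``toggling a lit endpoint unlights it and flips only its unique interior neighbor'' is false --- the endpoint stays lit and only its neighbour flips --- so the greedy inward sweep as written merely oscillates the endpoint's neighbour. The correct sweep (the one the paper uses) toggles the \emph{second} leftmost lit vertex, which drags the leftmost lit vertex one step to the right; similar in spirit, but not what you wrote.

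The same misreading hides the genuinely hard case, which your proposal does not address: the legs of length $1$. All pendants share the single neighbour $O$, so under honest toggling they can only be flipped \emph{in unison} (by toggling $O$); toggling a lit pendant never changes its own state, so your claim that the pendants ``can be consolidated into a single lit vertex by toggling them one by one'' fails --- two lit pendants with everything else unlit cannot be separated by any such local move. This is exactly why the paper's proof opens with a separate argument for the case where the pendants are in different lit states: for every other lit generator $g=L_j$ it exhibits an explicit contraction sequence $L_{j-1},\dots,L_1,O,P,Q,O,L_1,\dots,L_{j-1}$ realizing $g\mapsto g\cdot PQ$, where $PQ$ is the central element built from one lit and one unlit pendant; this flips the lit state of $g$ while leaving the graph invariant, and is the mechanism that collapses everything onto a single lit pendant. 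When all pendants are in the same state, the paper instead observes that they remain in lockstep forever and reduces to the path sweep. Your plan for the star types (preprocess the depth-$2$ vertices to be unlit, then cancel lit legs pairwise via \Cref{lemma:light_even_legs}) is broadly the paper's strategy for the legs of length $\geq 2$, but without the pendant argument the proof does not close. A minor point: canonical type \textit{B3} has no leg of length $4$, so the configuration you flag as subtlest never occurs.
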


\begin{proof}
    We will first make a comment regarding multiple legs of length 1.
    \paragraph*{Legs of length 1 in different initial lit states.}
    First, assume that not all legs of length 1 have the same lit configuration in $\chi$. Let $P$ be one of the lit legs, and $Q$ an unlit one. Then, given any generator $g\in\mathcal{G}$ other than $P$ or $Q$ we can find a sequence of contractions that maps $g\mapsto g'=PQ\cdot g$ and leaves the graph invariant (this is clearly the case since $PQ$ commutes with every element in the Lie algebra). If $g=O$ then we can contract $P$ and $Q$ onto O since both anti-commute with it. For $g=L_j$ any other vertex on a leg $L$, we contract onto $L_j$ the following sequence of vertices
    \begin{equation}
        L_{j-1},\dots,L_{1},O,P,Q,O,L_1,\dots,L_{j-1}.
    \end{equation}
    It is easy to check that all of these contractions are allowed, and transform $g$ so that the final generator is $g\cdot PQ$. Now, since $P$ was lit, the transformation $g\mapsto g\cdot PQ$ changes the lit state of $g$. Then, we can perform this operation freely to every lit generator $g\neq P$, so that after that, all vertices are unlit but $P$. 
    
    \paragraph*{Legs of length 1 in the same initial lit state.} If all legs of length 1 are in the same initial state, since this can only change by toggling $O$, which flips all legs, they will always be in the same state. Then, for graphs in class $A$ let us establish that legs of length 1 are on the left end of the graph, and the central vertex $O$ corresponds to the second leftmost vertex. Then we will proceed by always toggling the second leftmost vertex in $\Gamma$. Repeating this procedure we always end up moving the leftmost lit vertex one position to the right. Thus, ultimately all lit vertices concentrate in a smaller region of the path and hence eventually cancel all out but one. In case only the legs of length 1 are lit, we toggle one of them, in order to light $O$, and then proceed in the same way. For all other cases, i.e., graphs of type $B1,B2,B3$, we will show that we can easily take care of at least all but one legs of length 2, and map the discussion to a particular case of the above, namely a small path graph. In this scenario, we can assume $wlog$ that the center of the graph $O$ and all legs of length 1 are lit. If not, we simply need to take some other leg $L$ with at least one lit vertex, and toggle the innermost lit vertex until $O$ is lit. Toggling $O$ then lights the legs of length 1. 
    
    Moreover, for any leg $L$ of length 2, with at least one lit vertex, we can always assume only $L_1$ is lit. If both were lit, we can toggle $L_1$. This unlights $L_2$ and $O$, which we can light back by toggling $\omega$. If only $L_2$ was lit, we can toggle it and we are in the previous case. For the leg of length 3 or 4 in $B3$ and $B2$ respectively, the same logic follows. From this we can assume that the second vertex of every leg is always unlit, which in turn allows us to apply \Cref{lemma:light_even_legs} freely between any pair of legs. Hence, let us call $L$ either the leg of length 4 in $B2$, the leg of length 3 in $B3$ or some randomly chosen leg of length 2 in $B1$. For any unlit leg $M$ of length 2, applying \Cref{lemma:light_even_legs} lights up $M_1$, while inverting the state of $L_1$ and possibly $L_3$. Repeating this for every unlit leg of length 2 and toggling $O$, unlights all neighbours of $O$ except $L_1$ potentially. If after this step only $O$ is lit we are done. Otherwise, we just need to solve a small instance of the previous case for graphs in \textit{A}, where again, we will toggle the second innermost lit vertex in L repetitively. By doing so, we always end up with just one lit vertex in L.
\end{proof}

With this, we can now prove \Cref{thm:types}.

\begin{reptheorem}{thm:types}[Canonical graphs]
       
    For any given connected graph $\Gamma$, there exists a sequence of contractions which result in one of three graphs:
    \begin{enumerate}[label=(\Alph*)]
        \item A line graph with with $n_L$ vertices as well as $n_c\in\mathbb{N}_0$ single vertices connected to the 
        second to last vertex.
        \item A star graph with legs of length of at most $4$ and where $n_c,n_2\in\mathbb{N}_+$ appear at least once but can be arbitrarily often and additionally there can be 
        \begin{enumerate}[label=(\arabic*)]
            \item no legs of length $3$ or $4$,
            \item one leg of length $4$ and no leg of length $3$,
            \item one leg of length $3$ and no leg of length $4$.
        \end{enumerate}
    \end{enumerate}
    \end{reptheorem}

    \begin{proof}
        Given any initial graph $\Gamma$ we will prove such a sequence exists by induction on its vertices. It is clear that if we pick any vertex of $\Gamma$ at random, and then add some vertex in its neighbourhood, this trivially is a graph in \textit{A}. Assume now that after having added $N$ vertices one has a graph $\Gamma_N$ of the form of one of our canonical types. We show that $\Gamma_{N+1}=\Gamma_N\cup\{V\}$, for some $V\in\mathcal{N}(\Gamma_N)$, can always be mapped to a canonical graph as well. Take a lightning $\chi=\Gamma_{N+1}(V)$. By \Cref{thm:1_lit} this is always equivalent to a lightning where only one vertex is lit, such that $V$ shares just one edge with $\Gamma_N$. For the case where several legs of length 1 where initially in different lit configurations in $\chi$, the new vertex creates a new leg of length 2. By \Cref{lemma:shortening} this can then be transformed into one of the canonical graphs. If all legs are in the same state, for graphs in \textit{A}, if $O$ or the outermost vertex of the long leg $L$ are lit, then the graph is still in class \textit{A}. Otherwise, if $V$ is connected to some vertex $L_j\in L$ other than the one at its end, then $deg(L_j)=3$. In this case, we can take a lightning $\chi'=\Gamma_{N+1}(L_{j+1})$ and toggle according to 
        \begin{equation}
            \label{eqn:contraction_move_legs}
            L_j,L_{j-1},\dots,L_1,O,\omega,V,L_j,\dots,L_1,O
        \end{equation}
        with $\omega$ one of the legs of length 1 of $O$. This removes the edge $(L_j,L_{j+1})$ and creates the edge $(O,L_{j+1})$. The resulting graph is a star graph with just one central vertex, which by \Cref{lemma:shortening} can be transformed into one of our canonical graphs. For all other cases, namely for $\Gamma_N$ of type $B1,B2,B3$, whenever $V$ is connected to $O$, $\Gamma_{N+1}$ remains in $B1,B2,B3$ respectively. Otherwise, $V$ is connected to some vertex in a leg of length 2, the leg of length 4 or the leg of length 3, respectively. If this is actually the end vertex of these legs, either $\Gamma_{N+1}$ is already of our canonical types, or by \Cref{lemma:shortening} it can be converted to it. Instead, if V connects to some other vertex $L_j$, such that $deg(L_j)=3$ in $\Gamma_{N+1}$, we can reproduce the same argument as before. After taking a lightning $\chi'=\Gamma_{N+1}(L_{j+1})$ and repeating the sequence in \Cref{eqn:contraction_move_legs} we transform $\Gamma_{N+1}$ back into a star graph, which again by \Cref{lemma:shortening} can be transformed into one of our canonical graphs.
    \end{proof}

\section{Elements of graph Lie-algebra}
In order to describe Paulis within the Lie algebra spanned by some generators we introduce the concept of colouring. 
In general, we are interested in the Paulis in the algebra. As such, we can associate to a particular Pauli
\begin{align}
    P_C=\prod_{P_i\in \mathcal{G}} P_i^{C_i}
\end{align}
a vertex colouring $C\in \{0,1\}^{n_G}$ of the graph, such that a vertex $v_i$ is coloured if $C_i=1$. We say that $P_C$ is in the Lie algebra if $C$ is a \emph{valid} colouring. The map $C\mapsto P_C$ does not need to be injective, since the $P_i$ can have algebraic relations between them.

When we consider elements of the Lie-algebra, we require that they can be written as a linear combination of nested commutators of the 
generators
\begin{align}\label{eq:nested_com}
    P=[\cdots [[P_{i_1},P_{i_2}],P_{i_3}]\cdots P_{i_{n_L}}]\,.
\end{align}
We do not need to consider commutators of commutators since by the Jacobi identity
\begin{align}
    [[A,B],[C,D]]=[[[A,B],C],D]]-[[[A,B],D],C]].
\end{align}
In particular, for Paulis, in order for the term in the $lhs$ to be non-zero, one of the two terms in $rhs$ has to vanish, due to the rule that for 3 anti-commuting Paulis, the commutator of 2 of them commutes with the third. This then means that there is always a way to rewrite some non-zero nested commutator into the form of \Cref{eq:nested_com}. As such, we can simply consider how do graph colourings change when taking commutators with the generators. In order for the commutator to be non-trivial, we need that the Pauli anti-commutes with the generator. As such, the rule is that a colour of a vertex can be flipped, if an odd number of its neighbouring vertices are coloured.

Having found the previous fundamental graphs, now we can try to exploit their cycle-less structure to find a set of rules to determine which colourings are \emph{valid}, i.e., which generators give non-zero commutators.

\begin{lemma}[Valid colourings]
    \label{lemma:non_zero_comm}
    For any graph in its canonical form, we have that
    \begin{enumerate}
        \item for a path, any colouring that has one connected component and potentially an additional even numbers of legs of length 1 that are coloured corresponds to valid colouring,
        \item for a star, any colouring with an odd number of connected components is a valid colouring, except for the case with a single leg of length 3 (\textit{B3}), when the first and third vertex of the long leg and an odd number of legs of length 1 are coloured.
    \end{enumerate}
    Here, connected component refers to the connected component of the sub-graph induced by only coloured vertices.
\end{lemma}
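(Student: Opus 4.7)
The plan is to reformulate the question in purely combinatorial terms and then exploit the tree structure of the canonical graphs of \cref{thm:types}. Since a non-zero commutator of two Paulis is again a Pauli (up to a scalar), the validity of a colouring $C$ is equivalent to its reachability from some singleton colouring under the flip rule: flip vertex $v_j$ iff an odd number of its neighbours are currently coloured. The Jacobi identity, combined with the rule that among three pairwise anticommuting Paulis the commutator of two of them always commutes with the third, ensures that every non-zero nested commutator can be rewritten in the left-associated form of \eqref{eq:nested_com}, so reachability under the flip rule captures the entire Lie algebra.

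The next step is to establish a combinatorial invariant: \emph{on any tree, a legal flip changes the number of connected components of the coloured subgraph by an even amount.} The argument is short. If $v_j$ is currently uncoloured and has $k$ coloured neighbours (with $k$ necessarily odd), then in a tree the unique path between any two of these neighbours passes through $v_j$, which is uncoloured; hence the $k$ coloured neighbours lie in $k$ distinct coloured components, and colouring $v_j$ merges them into one, changing the component count by $1-k$, an even number. The removal case is symmetric. Since singletons have exactly one (odd) component, every reachable colouring must have an odd number of connected components, which immediately yields necessity for both Class A and Class B.

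For sufficiency the strategy is explicit construction. On a line one can extend or retract a connected coloured sub-path one vertex at a time by flipping the vertex just outside or at its endpoint, both of which see exactly one coloured neighbour. In Class A this produces any desired connected sub-path, and the isolated pendants are installed by temporarily extending the sub-path to reach $v_{n_L-1}$, toggling each desired pendant, and then retracting; the even parity of the extra pendants ensures that $v_{n_L-1}$ still has an odd number of coloured neighbours at the retraction step, so it can indeed be switched off, leaving exactly the target configuration. For Class B the construction is analogous: each leg is grown or shrunk independently using the centre $O$ as a hub, with the same parity bookkeeping controlling whether $O$ is switched off at the end.

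The genuine obstacle is the exception in Class $B3$. For the length-$3$ leg $L_1 - L_2 - L_3$ and an odd subset $S$ of length-$1$ legs, the colouring $\{L_1,L_3\}\cup S$ satisfies the odd-component condition yet must be shown to be invalid. The argument is that the corresponding Pauli $P_{L_1} P_{L_3} \prod_{\omega \in S} P_\omega$ commutes with every generator: by inspection with $P_{L_2}$, each $P_\omega$ and every length-$2$ leg, while against $P_O$ it picks up one sign flip from $P_{L_1}$ and one from each of the $|S|$ factors $P_\omega$, a total of $1+|S|$ flips, which is even when $|S|$ is odd. Hence this Pauli lies in the centre of the algebra generated by $\mathcal G$. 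Since the Class $B3$ Lie algebra is $\mathfrak{su}(2^n)$, whose centre consists only of scalar multiples of the identity, this Pauli must be proportional to $\mathbf 1$ and therefore cannot belong to $\mathfrak g$, which rules out precisely these colourings and completes the characterisation.
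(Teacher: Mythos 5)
Your overall strategy---reachability under the flip rule, a tree-based parity invariant for necessity, explicit growth and retraction for sufficiency---is the same as the paper's, and your parity-invariant argument is correct. However, your treatment of the $B3$ exception does not stand as written. You conclude by invoking the fact that the class-$B3$ Lie algebra is $\mathfrak{su}(2^n)$ with trivial centre; but that identification is exactly what \cref{thm:graph_to_lie} establishes, and its proof relies on this very lemma, so the argument is circular. Moreover, the inference ``central in the generated algebra, hence proportional to $\mathbf 1$'' is precisely what you cannot assume before the classification is known. The paper's argument is purely combinatorial: in the exceptional configuration every vertex has an even number of coloured neighbours, so no flip is legal; since flips are reversible, a non-singleton configuration admitting no legal flip is unreachable. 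Your observation that $P_{L_1}P_{L_3}\prod_{\omega\in S}P_\omega$ commutes with every generator could be salvaged without circularity: a Pauli commuting with all of $\mathcal G$ cannot arise as a non-zero nested commutator of the form \cref{eq:nested_com}, because the final step $[K',P_{i_m}]\propto K'P_{i_m}$ always anticommutes with $P_{i_m}$, and it cannot be a single generator either since the graph is connected---but you must make that argument rather than appeal to $\mathfrak{su}(2^n)$.

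Two further gaps. First, for class $A$ your parity invariant only shows that the number of components is odd, whereas the lemma asserts that the coloured part of the path forms a \emph{single} connected component, with only pendant vertices contributing extra components. A colouring with three components on the path has odd parity yet is not valid; the paper needs a separate argument showing that multiple components can never be created along the long leg, and your proof omits it, while claiming that the invariant ``immediately yields necessity'' for class $A$. Second, your sufficiency for class $B$ (``each leg is grown or shrunk independently using the centre as a hub'') fails exactly where the work lies: to reach, say, $\{O,L_2\}$ on a leg you must colour $L_1$ and then $L_2$, after which $L_1$ has two coloured neighbours and cannot be uncoloured. Configurations with two components on a long leg---in particular the $B2$ colouring in which only the length-$4$ leg and pendants are coloured---require the dedicated, non-obvious flip sequences that the paper spells out explicitly, and ``analogous parity bookkeeping'' does not supply them.
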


\begin{proof}
We first show that every colouring that can be generated by the allowed transformations needs to obey these conditions.
We use that the graphs do not have loops. Therefore, connectivity is locally preserved, meaning that two coloured vertices in the neighbourhood of an uncoloured vertex always belong to two different connected components.
If a vertex has only one coloured neighbour, then flipping it only adds/removes the vertex from the respective connected component, but never changes the number of components. In general flipping a vertex with $2k+1$ coloured neighbours then merges/disconnects $2k+1$ components. As such, the parity of the number of connected components can never change (since flipping a vertex with an even number of coloured neighbours is not allowed). Since by \Cref{eq:nested_com} we always start our colouring from a single vertex, the number of connected components will always be odd. Additionally, for the canonical types, this means that the number of connected components only changes by flipping the central vertex. 
In particular, this implies that for graphs in class \textit{A} we cannot have different connected components along the long leg. Given \Cref{eq:nested_com} one can check, that in order to create several connected components on the long leg, on should start from a colouring with one single connected component spanning over some vertices of the long leg, the center, and potentially some even number of legs of length 1. Uncolouring $O$, would then create an odd number of connected components, one of which on the long leg. Given that colour flips on the long leg, will not create new connected components by themselves, the only way to create more connected components, would be to uncolour the innermost vertex of this leg, and colour back the center. This however, cannot be done, since the center would then just have an even number of coloured neighbours, preventing us from colouring it back.

Finally, for the exception in $B3$, where an odd number of legs of length 1 and alternating vertices of the leg of length 3 are coloured, we note, that this corresponds to a configuration where no additional vertices are allowed to be flipped. As such, this colouring cannot be constructed from some sequence of vertex colourings.

To show that every such colouring can also be generated, we will give an explicit algorithm. 
\begin{enumerate}
    \item Start with a coloured central vertex.
    \item Colour the first vertex of all legs which in the final configuration have at least one coloured vertex. 
    \item Potentially colour a leg of length 1 $\omega$, then uncolour the center.
    \item Move each connected component into the correct position, by colouring vertices adjacent to it. 
    \item If there is a second connected component in the long leg $L$ (3 or 4), first move the outer into position. Then, take some leg $T$ of length 2 with some coloured vertex. Depending on the parity of the number of neighbours of $O$ that are coloured, flip the colours in $T$. Colour the central vertex $O$ and $L_1$. Uncolour the leg of length 1 $\omega$ and uncolour $O$ again.
    
    \item If desired, colour the center back on and flip the leg of length 1 again.
   
\end{enumerate}
This algorithm will prepare any valid graph colouring as long as there is a connected component on a leg of length 2 or no two connected components on the same leg. The two remaining cases are when the long legs $L$ have two connected components and only legs of length 1 are coloured. For the case  \textit{B3} (one leg of length 3), these are not able to be prepared as argued above. 
For \textit{B2} (one leg of length 4), however, we can prepare them. In order to show this, we just need to argue that a colouring where the central vertex and the second and fourth vertex of the long leg are coloured, is a valid colouring. 
Starting from the coloured center we achieve this by flipping the colours according to
\begin{align}
    L_1,L_2,L_3,L_4 ,T_1,\omega,O, T_2,T_1, L_1,L_2,L_3, O,\omega ,L_1,O,L_2,L_1 T_1,T_2,O,T_1
\end{align}
where $L$ refers to the long leg, $T$ to a leg of length 2 and $\omega$ to a leg of length 1. After this, we can colour the rest of legs of length 1, which is an odd number, and uncolour $O$. Finally, on $L$, we can trivially go from this colouring to any other with two connected components.

\end{proof}

\section{Algebraic dependent generators}
\label{sec:appendix_algd}

Like we discussed in the main text, we are not only interested in classifying Pauli Lie algebras, but also in determining which can be mapped to each-other through Clifford operations. To do so, we introduced \Cref{thm:clif_algdeb} to characterize algebraic dependencies between generators, which ultimately will determine whether or not two sets of Paulis are Clifford equivalent. We prove this theorem here.

\begin{reptheorem}{thm:clif_algdeb}[Limits to algebraic dependencies]
        Every minimal generator Pauli Lie algebra has at most one algebraic dependence. Algebraic dependencies can only occur for graphs of type \textit{A} and  \textit{B3}. The two possibilities are shown in \Cref{fig:canonical_types} (c)
        As such, there is a total of $6$ Clifford inequivalent families of Pauli Lie algebras.
\end{reptheorem}

\begin{proof}
    It is enough to prove this just for graphs in the types of \Cref{thm:types}, as contractions preserve algebraic and Lie algebraic dependencies. Then, we will show that algebraic dependencies on these graphs need to involve a colouring where the end vertices of some legs of odd length are colored, have alternating structure, and can only involve one leg of length 1. Overall, this will just leave the two cases mentioned in the statement. 
    
    Assume we have an algebraically dependent subset of $\mathcal{G}$, that is $\{P_i\}_{i=1}^m$ such that $P_1\dots P_m=\mathbf{1}.$ Since $\mathbf{1}$ commutes with everything, the colouring of $\{P_i\}_{i=1}^m$ on the graph, must share an even number of edges with every vertex in it.
    Then, for some leg $L\subseteq\Gamma$ (which for class \textit{A} might be the path itself), if some element in $L$ is coloured, i.e.,  $\{P_i\}_{i=1}^m\cap L\neq\{\emptyset\}$, then the end vertex of $L$ must be coloured. Otherwise, we could find some vertex in $L$ sharing just one edge with the colouring. From this, it is immediate to see that the colouring must have an alternating structure (coloured, then uncoloured). If the end vertex of a leg is coloured and its neighbour as well, then the end vertex itself would share just one edge with the coloured set. The alternating structure then follows from carrying on with this argument. 
    Finally, the center can never be coloured, as otherwise the vertex of a leg of length 1 would only have one coloured neighbour.
    It follows that an algebraic relation cannot have a colouring involving legs of even length, as there is no alternating colouring that colours the outer vertex, but not the center.
    
    If we have a colouring involving more than one leg of length $1$, we can show that this either violates the condition that the generators are Lie-algebraically independent, or that the graph can be mapped to a form where the algebraic relation involves at most one leg of length one. 
    Let $\{L_i\}_{i=1\dots 2k+1}$ be coloured legs of length 1. We can contract $L_2,\dots, L_{2k+1}$ onto $O$, yielding $O'=OL_2\dots L_{2k+1}$. Contracting now $O'$ onto $L_1$, results in $L_1'=L_1L_2\dots L_{2k+1}O$. While this modifies the connectivity of the graph, if we now contract again all $L_2,\dots, L_{2k+1}$ onto $O'$, and then again the new $O''=O$ onto $L_1'$, this undoes the change, and leaves all vertices the same, except for $L_1$ which is mapped to $L_1''=L_1,\dots , L_{2k+1}$. After this, the only coloured leg of length 1 is $L_1''$.
    Since the total number of coloured legs need to be even, this transformation allow us to get to two scenarios. The first is when only two legs of length one are coloured, in which case $P_1P_2=\mathbf{1}\iff P_1=P_2$, which contradicts the Lie algebraic independence. As such, only the second case is relevant, when one leg of length $1$ and a longer leg of odd length are coloured. This is only possible in case $A$, when the length of the long leg is odd and B3.
    One last remark, is that there cannot be several such lightnings involving different legs of length 1, as this would imply an algebraic dependence between the legs of length 1, which we already showed is not allowed.
   
\end{proof}

\section{Controlled Lie-algebras and removing legs of length 1}
\label{app:controls}

In \Cref{thm:graph_to_lie} in the main text, we show that in the presence of several legs of length 1, $2^{n_c}$ direct sum terms appear in the description of the Lie algebra. In this appendix, we show why this happens and what these several legs of length 1 on the graph represent in physical terms.

\begin{lemma}[Controlled Lie-algebras]\label{lem:control}
    Given a graph $\Gamma$ with two legs of length 1 ($P_1,P_2$), its Lie algebra splits into a direct sum of two smaller Lie algebras, in particular those with anti-commutation graph $\Gamma\setminus\{P_2\}$. This also corresponds to a physical direct sum over the qubits upon which $Z_i\coloneqq P_1P_2$ acts.
\end{lemma}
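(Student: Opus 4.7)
(Sketch)
The plan is to exploit the fact that $Z_i \coloneqq P_1 P_2$ sits in the center of the Lie algebra and then use this central involution to split $\mathfrak{g}$ via its $\pm 1$ eigenspaces, identifying each block with the Lie algebra of the graph obtained by deleting one of the two length-$1$ legs.

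First I would verify that $Z_i$ is Hermitian, squares to $\mathbf{1}$, and commutes with every generator in $\mathcal{G}$. Since $P_1$ and $P_2$ are legs of length $1$, they share no edge in $\Gamma$ and hence commute, so $Z_i^2 = P_1^2 P_2^2 = \mathbf{1}$. For any other generator $Q \in \mathcal{G}$: if $Q$ is the central vertex $O$, then $Q$ anti-commutes with $P_1$ and with $P_2$ individually and the two sign flips compose to a commutation with $Z_i$; if $Q$ is any other vertex, it commutes with both $P_1$ and $P_2$ separately since these are adjacent only to $O$. Hence $Z_i \in Z(\mathfrak{g})$.

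Next I would use $Z_i$ to block-decompose the Lie algebra. Its eigenprojectors $\Pi_\pm = (\mathbf{1} \pm Z_i)/2$ lie in the commutant of $\mathfrak{g}$, so every $X \in \mathfrak{g}$ splits as $X = \Pi_+ X \Pi_+ + \Pi_- X \Pi_-$, yielding $\mathfrak{g} \simeq \mathfrak{g}_+ \oplus \mathfrak{g}_-$. In each eigenspace the constraint $Z_i = \pm \mathbf{1}$ forces $P_2 = \pm P_1$, so the restricted generators are in bijection with $\mathcal{G} \setminus \{P_2\}$; moreover, because every $Q \in \mathcal{G}$ has the same anti-commutation pattern with $P_1$ as with $P_2$ (both only anti-commute with $O$), the restricted generators have anti-commutation graph exactly $\Gamma \setminus \{P_2\}$. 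Consequently each $\mathfrak{g}_\pm$ is isomorphic to the Lie algebra generated by $\mathcal{G} \setminus \{P_2\}$.

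For the physical direct sum statement I would invoke the standard Clifford fact that every Hermitian Pauli squaring to $\mathbf{1}$ is Clifford-conjugate to a single-qubit $Z$, giving $C Z_i C^\dagger = Z_a$ for some ancilla qubit $a$. Conjugating $\mathfrak{g}$ by $C$ promotes the abstract eigenspace decomposition into an honest tensor-product split $\mathcal{H} \simeq \CC^2_a \otimes \mathcal{H}_{\mathrm{rest}}$ on which the two blocks act independently, which is the claimed physical direct sum over the qubit register on which $Z_i$ acts. The delicate step I expect to be the main obstacle is verifying faithfulness of the restriction maps $\mathfrak{g} \to \mathfrak{g}_\pm$, i.e.\ that no nontrivial Lie element vanishes on either eigenspace; this should follow from minimality of $\mathcal{G}$ together with the observation that Lie-algebraic independence of $\mathcal{G} \setminus \{P_2\}$ is inherited after projection onto either eigenspace.
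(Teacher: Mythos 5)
Your skeleton matches the paper's proof: identify $Z_i=P_1P_2$ as a central involution, Clifford-rotate it to a single-qubit $Z$, and block-decompose along its eigenspaces; the centrality check and the identification of each block's generating set with $\mathcal{G}\setminus\{P_2\}$ are fine. But there is a genuine gap at the decisive step. The decomposition $X=\Pi_+X\Pi_++\Pi_-X\Pi_-$ only gives an embedding of $\mathfrak{g}$ into $\mathfrak{g}_+\oplus\mathfrak{g}_-$ as a \emph{subdirect product}; it does not show the image is the full direct sum. A priori the two coordinates could be locked together -- the diagonal copy $\{(\bar X,\bar X)\}$ is a subdirect product that surjects onto both factors yet has only half the dimension -- and the lemma (and the dimension count behind the $2^{n_c}$ independent summands in \Cref{thm:graph_to_lie}) requires genuine independence of the blocks. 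The paper closes exactly this gap by proving that $Z_i\cdot P\in\mathfrak{g}$ for every $P\in\mathfrak{g}$, using the valid-colouring characterization of \Cref{lemma:non_zero_comm} (flipping the colours of $P_1$ and $P_2$ preserves the parity of the number of connected components); this puts both $\bar P\otimes\mathbf{1}_N$ and $\bar P\otimes Z_N$ in $\mathfrak{g}$, hence elements supported on a single block, hence the full direct sum. Your closing concern about ``faithfulness of the restriction maps'' is aimed at the wrong property: injectivity of $X\mapsto(\Pi_+X\Pi_+,\Pi_-X\Pi_-)$ is automatic from $\Pi_++\Pi_-=\mathbf{1}$; what actually needs proof is surjectivity onto the direct sum.

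A secondary omission concerns the ``physical direct sum'' claim: you need not only that $Z_i$ is Clifford-equivalent to a single-qubit $Z$, but that this $Z$ lives on a \emph{fresh} register, which the paper derives from the absence of algebraic dependencies between legs of length 1 (\Cref{thm:clif_algdeb}); your sketch asserts the tensor split without this input. If you prefer to stay within your eigenprojector framework rather than invoke the colouring lemma, note that $P_1+P_2$ and $P_1-P_2$ are already elements of $\mathfrak{g}$ each supported on a single block; independence would then follow if such an element generates the whole block as an ideal, but that requires structural information about $\mathfrak{g}_\pm$ (e.g., simplicity) that is not available at this point of the argument.
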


\begin{proof}
    It first should be noted that the product $Z=P_1P_2$ is a symmetry in the Lie algebra, since it shares an even number of edges with every generator. Now, we can use the fact that in order to find the minimal number of qubits for a set of algebraically independent generators, we count the number of disconnected points and pairs in the \textit{algebra} graph.
    Namely, it is easy to see that if we allow ourselves to perform contractions between disconnected points (which one can do if interested just in the algebra, not the Lie algebra) any given graph $\Gamma$ can be transformed into the disjoint union of some pairs of connected vertices and isolated vertices. Then if all generators are independent, we necessarily need to add a qubit for each such connected component, as connected pairs can be mapped to $(X_i,Z_i)$ on some qubit $i$, and single points to $(Z_j)$ on some different register $j$. Hence, since there cannot be an algebraic dependence between legs of length 1 (shown in \Cref{thm:clif_algdeb}), and $P_1P_2$ is a disconnected point, this means that this must necessarily involve adding a new qubit register. This in turn means that we can always find a Clifford operation that maps $Z=P_1P_2\mapsto Z_N$, where $N$ is the new register. 

    Now, given some arbitrary element $P\in\mathfrak{g}(\Gamma)$, we can show that $Z\cdot P\in\mathfrak{g}(\Gamma)$. Using the terms that we introduced in previous sections, we need to show that given a colouring on the graph such that the resulting $P$ is in the Lie algebra, then adding $P_1,P_2$ to this colouring is also in $\mathfrak{g}$. However, from \Cref{lemma:non_zero_comm} this follows directly, since flipping the colouring of $P_1,P_2$ does not change the parity of the number of connected components, and thus also is a valid colouring. This then splits the Lie algebra in
    \begin{equation}
        \{\Bar{P}\otimes\mathbf{1}_N\}\bigoplus\{\Bar{P}\otimes Z_N\}=\mathfrak{g}.
    \end{equation}
    Now, since \textit{wlog} the only term containing $Z_N$ is $P_2$, for every element $\Bar{P}\otimes\mathbf{1}_N\in\mathfrak{g}$, its colouring must be in $\Gamma\setminus\{P_2\}$. On the other hand, for any element $\Bar{P}\otimes Z_N\in\mathfrak{g}$, its colouring must be in $\Gamma\setminus\{P_1\}$, with $\mathfrak{g}(\Gamma\setminus\{P_1\})\cong\mathfrak{g}(\Gamma\setminus\{P_2\})$ because the graphs are the same. 
\end{proof}

Thus, the two takeaways from this are that, one the one hand, we can characterize graphs with just one leg of length 1, as all other Lie algebras will be direct sums of these and, on the other hand, that these direct sums, are effectively also a direct sum in Hilbert space, i.e., they entail a whole new register which is acting as a control on the dynamics. By repeated application of \cref{lem:control}, it follows that the Lie algebra decomposes into $2^{n_c}$ many direct sums in Hilbert space, which are Clifford equivalent to \begin{align}
    Z_1^{i_1}\otimes\cdots \otimes Z_{n_c}^{i_{n_c}} \otimes P_{\mathfrak{g}}
\end{align}
for $\vec i\in \{0,1\}^{n_c}$.
Finally, since a Lie algebra is a vector space an equally valid basis is
\begin{align}
    \ketbra{\vec i}\otimes P_{\mathfrak g}
\end{align}
making explicit the effect of registers $1$ to $n_c.$

\section{Lie algebra classification}

In this section, we will constructively prove the rest of \Cref{thm:graph_to_lie}. As the key ingredient to do so, will be introducing a canonical set of Pauli labels for each graph family. Then, since the Lie algebra type just depends on the graph, we will just need to characterize these canonical sets in order to have a full classification. We show in \Cref{fig:clifford_labeling} all 6 Clifford inequivalent graphs, together with 
their corresponding canonical labels.

\begin{figure}[ht]
    \centering
    \includegraphics[scale=0.25]{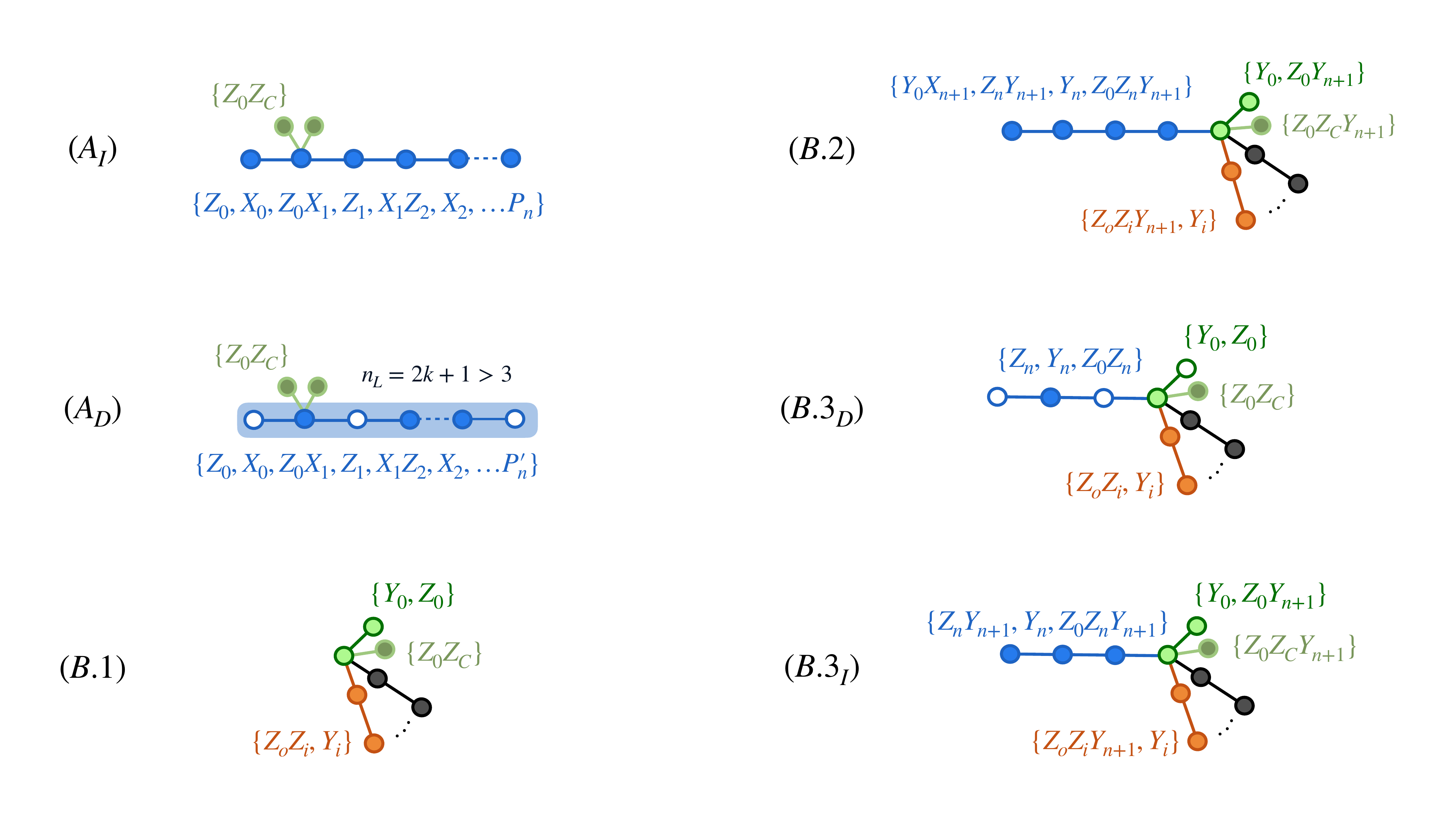}
    \caption{All 6 Clifford inequivalent Lie algebraic classes with their respective canonical labeling as stated in the text. }
    \label{fig:clifford_labeling}
\end{figure}

Let us start with graphs of type \textit{A}, both with algebraic dependence and without. In this case, we simply need to choose a set of closest neighbour interacting Paulis, 
\begin{equation}
    S_{\textit{A}}=\{Z_0,X_0,Z_0X_1,Z_1,X_1Z_2,X_2,\dots P_m\}\cup\{Z_0Z_{C}^i\}_{i= 1,\dots,n_C},
\end{equation}
together with a set of labels for additional control legs,
where $P_m$ depends on the parity of $n_L$ and whether we are in the $(A_D)$ or $(A_I)$ instance. For $n_L=2k$ all terms must be algebraically independent, so that $P_m\in\{Z_{n_L/2-1}X_{n_L/2},X_{n_L/2-1}Z_{n_L/2}\}$ depending on the parity of $n_L/2$. For $n_L=2k+1>3$ we can either have algebraic dependence or not. For $(A_I)$, $P_m\in\{Z_{(n_L-1)/2-1}X_{(n_L-1)/2},X_{(n_L-1)/2-1}Z_{(n_L-1)/2}\}$, whereas in $(A_D)$, $P_m\in \{Z_{(n_L-1)/2},X_{(n_L-1)/2}\}$. With this, we can now prove that indeed this corresponds to the free-fermionic Lie algebra. We can limit to characterizing the Paulis lying on the path graph without legs of length 1, as controls can be removed at this point. From these, it is easy to define a set of $n_L$ majorana operators. In particular, $wlog$ we use Paulis in $(A_I)$ to do so, as 
this will not change the Lie algebra. Thus
\begin{align}
        \gamma_0 &= Z_0,\\
        \gamma_1 &= Y_0,\\
        \gamma_2 &= X_0X_1,\\
        \dots \\
        \gamma_{n_L-1} &= \Pi_{i=0}^{m} 
        P_i,
    \end{align}
    which satisfy
    \begin{equation}
        \comm{\gamma_i}{\gamma_j} = \pm 2i\gamma_i\gamma_j,\hspace{0.5cm}\forall i,j
    \end{equation}
    and thus
    \begin{equation}
        \comm{\gamma_k}{\gamma_i\gamma_j} = 0,\hspace{0.5cm}\forall k\neq i,j.
    \end{equation}
    Hence, a basis for the Lie algebra is given by $\{\gamma_i\}_{0\leq i\leq n_L-1}\cup\{\gamma_i\gamma_j\}_{0\leq i<j\leq n_L-1}$, where quadratic terms satisfy 
    \begin{equation}
        \comm{\gamma_i\gamma_j}{\gamma_r\gamma_s} = \pm 2i(\delta_{j,r}\gamma_i\gamma_s-\delta_{s,i}\gamma_r\gamma_j).
    \end{equation}
    The size of this basis then gives us that \begin{equation}
    \dim(\mathfrak{g})=\frac{n_L(n_L+1)}{2}, 
    \end{equation}
    which coincides with the dimension of $\mathfrak{so}(n_L+1)$. Finally, taking a basis of anti-symmetric matrices of size $(n_L+1)\times(n_L+1)$, it is easy to check that the structure constants of $\mathfrak{g}$ and $\mathfrak{so}(n_L+1)$ are also the same, thus proving equality.
    The difference between $(A_I)$ and $(A_D)$ then comes from the fact that for the latter, the last Majorana 
    operator $\gamma_{n_L-1}$ is equal to the global parity operator.

    For the rest of star-graph families, before giving the labeling, we will first introduce the following lemma regarding properties of certain colourings on the graph. With this lemma, proving the Lie algebra type will follow almost directly.

\begin{lemma}[Odd colouring property]
    \label{lemma:odd_colouring_property}
    If all generators $P\in \mathcal{G}$ of a star-type Pauli Lie algebra satisfy 
    \begin{align}
        \label{eqn:defining_condition}
        P^T Q+QP=0
    \end{align}
    for some Pauli $Q$, then every Pauli $K=\prod_{i\in s\subset \mathcal{V}, |s|\geq 1} P_i$
    it holds that
    \begin{align}
        \textrm{the colouring corresponding to} \, K\,\,\textrm{has an odd number of connected components}\Leftrightarrow K^TQ+QK=0.
    \end{align}
\end{lemma}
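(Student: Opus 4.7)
The plan is to reduce the sign condition $K^{T}Q+QK=0$ to a simple parity count that can be read off directly from the colouring, and then identify that parity with the number of connected components using the tree structure of the canonical star graphs. First, I note that any Pauli $P$ satisfies $P^{T}=\epsilon_{P}P$ with $\epsilon_{P}\in\{\pm 1\}$ depending on whether $P$ contains an even or odd number of $Y$ factors, and that $QP=\eta_{P}PQ$ with $\eta_{P}\in\{\pm 1\}$ according as $P$ commutes or anti-commutes with $Q$. The defining hypothesis $P^{T}Q+QP=0$ is then exactly the statement $\epsilon_{P}=-\eta_{P}$ for every generator.

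Next, I would compute the analogous invariants $\epsilon_{K},\eta_{K}$ for a product $K=P_{i_{1}}P_{i_{2}}\cdots P_{i_{m}}$, where $m=|s|$. Pushing $Q$ across each factor in turn yields $\eta_{K}=\prod_{j}\eta_{P_{i_{j}}}$. Reversing the factor order of $K^{T}$ back into the original one produces a sign $(-1)^{a}$, where $a$ is the number of anti-commuting pairs among the chosen generators, i.e.\ the number of edges in the induced subgraph on $s$. Hence $\epsilon_{K}=(-1)^{a}\prod_{j}\epsilon_{P_{i_{j}}}$. Substituting the hypothesis $\epsilon_{P_{i_{j}}}=-\eta_{P_{i_{j}}}$ collapses the product into $\epsilon_{K}=(-1)^{a+m}\eta_{K}$, so that the condition $K^{T}Q+QK=0$ is equivalent to $(-1)^{a+m}=-1$, i.e.\ to $a+m$ being odd.

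The last step is to recognise this parity as the parity of the number of connected components. The canonical star-type graphs are trees, so the subgraph induced by any vertex subset $s$ is a forest, and the Euler identity for forests gives $c=m-a$, where $c$ is the number of connected components of that induced subgraph. Therefore $c$ and $a+m$ have the same parity, and we conclude that $K^{T}Q+QK=0$ if and only if $c$ is odd, as claimed.

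I expect the main subtle point to be the sign bookkeeping in the two steps above: one must check that the particular ordering of the factors $P_{i_{j}}$ chosen for $K$ only changes $K$ by an overall sign, so that the vanishing of $K^{T}Q+QK$ is well defined; and one must carefully account for the anti-commuting pairs that arise both from moving $Q$ through the product and from reversing the factor order in the transpose. Once those two identities are in hand, the conclusion is immediate from the Euler formula for forests.
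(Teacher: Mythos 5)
Your proof is correct, and it takes a genuinely different route from the paper's. The paper argues recursively: it first derives a multiplication rule showing that the product of two anti-commuting Paulis both satisfying $P^TQ+QP=0$ again satisfies it, uses the tree structure to order each connected component as a nested product in which every new factor anti-commutes with the running partial product (so each component's product satisfies the condition), and then iterates the complementary rule for commuting factors across the components to land in $S_{n_{cc}+1 \bmod 2}$. You instead do a single global sign computation: writing $P^T=\epsilon_P P$ and $QP=\eta_P PQ$, the hypothesis becomes $\epsilon_P=-\eta_P$ per generator, and tracking both signs through the product gives $\epsilon_K=(-1)^{a+m}\eta_K$ with $m=|s|$ vertices and $a$ induced edges; the Euler identity $c=m-a$ for forests (valid because the canonical star graphs are trees) converts the parity of $a+m$ into the parity of the number of connected components. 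Both arguments use the acyclicity of the canonical graphs in an essential way, but at different points: the paper needs it to order the factors within a component, while you need it only for the forest formula. Your version is arguably more self-contained and makes the role of the edge count explicit; the paper's version isolates the reusable product rules $S_{i_V}\times S_{i_W}\to S_{\cdot}$. Your closing remark about well-definedness is the right one to flag: reordering the factors of $K$ changes it only by an overall sign, and the condition $K^TQ+QK=0$ is homogeneous in $K$, so nothing is lost.
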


\begin{proof}
    For the set of $S=\{\prod_{i\in s} P_i| s\subset \mathcal{V}, s\neq \emptyset\}$ of all Paulis generated by some colouring,
    we define the two subsets 
    \begin{align}
        S_i =\{P\in S | P^TQ+(-1)^i QP=0\}, \quad \text{for } i \in \{0,1\}.
    \end{align}
    In general, we have for $L=V\cdot W$, where $V\in S_{i_V}$ and $W\in S_{i_W}$
    \begin{align}
    \nonumber
        L^TQ\pm QL&=(VW)^TQ\pm QVW=W^TV^T Q\pm QVW
        =-(-1)^{i_V} W^TQV\pm QVW\\
        &=(-1)^2(-1)^{i_V+i_W}QWV \pm QVW=Q((-1)^{i_V+i_W}WV \pm VW).
        \label{eq:transpose_commutator}
    \end{align}
    As a consequence, we have that if $V,W\in S_0$ and $\{V,W\}=0$, then $L=VW\in S_0$.
    This means that if we have a connected component $C_i\subset \mathcal{V}$, it follows that $P_{C_i}\coloneqq\prod_{j\in C_i} P_j\in S_0$. This is because we can generate this by a nested sequence of generators. Since the graph has no loop, we can start at any element, and then take a commutator with a Pauli from $C_i$ that anti-commutes, until all Paulis were used exactly once. 
    Now we consider the disconnected components. We have from \Cref{eq:transpose_commutator} that 
    $P_{C_1}P_{C_2}\in S_1$ since $[P_{C_i},P_{C_j}]=0$. By iterative application we have that 
    \begin{align}
        K=P_{C_1}\dots P_{C_{n_{cc}}}\in S_{n_{cc}+1\mod 2}.
    \end{align}
    This means that the Pauli $K$ satisfies the condition if the number of connected components $n_{cc}$ is odd, and satisfies $K^TQ- QK=0$ if $n_{cc}$ is even. This gives the if and only if relationship that is desired by the lemma.
\end{proof}
\label{app:graph_lie_proof}

Having stated this, we can start discussing graphs $\textit{B1}$. In this case, there is just one Clifford inequivalent class and we choose the labeling to be given by
\begin{equation}
    S_{\textit{B1}}=\{Y_0,Z_0\}\cup\{Z_0Z_i,Y_i\}_{i=1,\dots , n_2}\cup\{Z_0Z_{C}^i\}_{i=1,\dots,n_C}
\end{equation}
where these sets label the central vertex and the first leg of length 1, all legs of length 2, and controls, respectively. Note that the central Pauli $Y_0$, is chosen so that $Q-\Omega$ as in the canonical definition in \Cref{app:preliminaries}  in the definition of $\mathfrak{sp}$. Moreover, every vertex adjacent to the center is labeled by a symmetric Pauli $Z_0,Z_0Z_i$, whereas non-adjacent vertices are labeled with an anti-symmetric Pauli $Y_i$. This precisely matches the defining condition for $\mathfrak{sp}$, i.e.,
\begin{equation}
    P^TY_0=-Y_0P.
\end{equation}
Hence, since all our generators satisfy this condition, which is preserved under Lie product, all elements in our Lie algebra will, i.e.
\begin{equation}
    \operatorname{span}(\langle S_{\textit{B1}}\rangle)\subseteq \bigoplus_{i=1}^{2^{n_c}}\mathfrak{sp}(2^{n_2}).
\end{equation}
For us to have equality as stated by \Cref{thm:graph_to_lie}, one would have to check that any element in $\bigoplus_{i=1}^{2^{n_c}}\mathfrak{sp}(2^{n_2})$ can be generated by our Paulis. This, however, follows from \Cref{lemma:odd_colouring_property}. Given some $K\in\mathfrak{sp}(2^{n_2})$, this can be written as a product of our generators, since these form a basis for the algebra of symplectic matrices of size $(n_2+1)\times(n_2+1)$. This follows from the same arguments as for finding the minimal number of qubits. Since $2n$ independent Paulis are required to span the full algebra on $n$ qubits, in order to span $\mathfrak{sp}$ one also needs at most $2n$ symplectic generators, which is what we have. Thus, there is some colouring on the graph corresponding to $K$. Since $K$ satisfies
\begin{equation}
    K^TQ+QK=0
\end{equation}
with $Q=Y_0$, then the colouring of $K$ must have an odd number of connected components. Now by \Cref{lemma:non_zero_comm} we know that such colourings can indeed be computed through nested commutators, thus concluding the proof that the Lie algebra spanned by this set of Paulis is indeed $\bigoplus_{i=1}^{2^{n_c}}\mathfrak{sp}(2^{n_2})$.

Given this labeling for the symplectic Lie algebra, one can easily find one for the family of graphs in \textit{B3}, both for $\textit{B3}_I$ and $\textit{B3}_D$. For the case of algebraic dependent generators, we can simply get our canonical labeling by adding $Z_n$, for $n\in\{1,\dots,n_2\}$.  It is easy to see that if one takes the product of Paulis in the leg of length 1 and the innermost and outermost vertices of the newly created leg of length 3
\begin{equation}
    Z_0\cdot Z_0Z_i\cdot Z_i=\mathbf{1}.
\end{equation}
Adding this additional Pauli allows us to generate any Pauli on $n_2+1$ qubits, i.e., promotes the Lie algebra to $\mathfrak{su}(2^{n_2+1})$. That is because given a Pauli on $n_2+1$ qubits, we can always find a colouring which is computable, since in the case where $P$ has a colouring with an even number of connected components, then $Z_0\cdot Z_0Z_i\cdot Z_i\cdot P$ is a colouring with an odd number of connected components that yields the same Pauli.

If the generators are not algebraically dependent, by the rules that determine the minimal number of qubits, we must add a new qubit as well. Hence, instead of adding $Z_n$, we will introduce $Z_nY_{n+1}$. Moreover, since every other Pauli commutes with $Y_{n+1}$, we can freely modify them so that the overall labeling for this case is
\begin{equation}
    \label{eqn:su_ind}
    S_{\textit{B3}_I}=\{Y_0,Z_0Y_{n+1}\}\cup\{Z_0Z_iY_{n+1},Y_i\}_{i=1,\dots , n_2-1}\cup\{Z_0Z_{C}^iY_{n+1}\}_{i=1,\dots,n_C}\cup\{Z_0Z_nY_{n+1},Y_n,Z_nY_{n+1}\},
\end{equation}
 it is clear that now 
\begin{equation}
    Z_0Y_{n+1}\cdot Z_0Z_iY_{n+1}\cdot Z_iY_{n+1}=Y_{n+1}
\end{equation}
so we cannot use the same argument as before. However, it suffices to prove it for one case, since the graph is the same for both. Furthermore, the last change was useful to see that as we stated in the main text, the map between $\textit{B3}_D$ and $\textit{B3}_I$ is given by $P\mapsto i\mathrm{Im}(P)\otimes \mathbf{1}_{n+1}+\mathrm{Re}(P)\otimes Y_{n+1}$, so that indeed the Lie algebra is the same in both cases, albeit in $\textit{B3}_I$ it is embedded in a bigger Hilbert space.

The other reason why appending $Y_{n+1}$ to symmetric generators was because we can now easily find our canonical labeling for the last class of graphs $\textit{B2}.$ Since we want to argue that this corresponds to the Lie algebra of anti-symmetric matrices $\mathfrak{so}$, and all generators in \Cref{eqn:su_ind} are already anti-symmetric, we just need to add another anti-symmetric Pauli that breaks the symmetry with respect to $Y_{n+1}$. To that end, our new set of canonical generators will be given by 
\begin{equation}
    S_{\textit{B2}}=\{Y_0,Z_0Y_{n+1}\}\cup\{Z_0Z_iY_{n+1},Y_i\}_{i=1,\dots , n_2-1}\cup\{Z_0Z_{C_i}Y_{n+1}\}_{C_i\in I}\cup\{Z_0Z_nY_{n+1},Y_n,Z_nY_{n+1},Y_0X_{n+1}\}.
\end{equation}
Then, since all generators satisfy the anti-symmetric condition, which is preserved under Lie product, we have
\begin{equation}
    \operatorname{span}(\langle S_{\textit{B2}}\rangle)\subseteq \bigoplus_{i=1}^{2^{n_c}}\mathfrak{so}(2^{n_2+3}).
\end{equation}
Like for the symplectic case, checking that any element in $\bigoplus_{i=1}^{2^{n_c}}\mathfrak{so}(2^{n_2+3})$ can be computed through nested commutators of our canonical generators is easy to see from \Cref{lemma:odd_colouring_property} with $Q=\mathbf{1}$. Again, for any Pauli $K$ satisfying 
\begin{equation}
    K^T\mathbf{1}_{n_2+3}+\mathbf{1}_{n_2+3}K=0
\end{equation}
its colouring on the graph must have an odd number of connected components, and thus must be computable through nested commutators. 
This concludes our construction of all canonical Pauli sets for every Clifford inequivalent class, and serves as a proof of \Cref{thm:graph_to_lie}.

\subsection{Clifford invariant form}

In the previous section we mainly worked with the canonical definition for $\mathfrak{sp}$ and $\mathfrak{so}$. However, since Clifford operations do not change the Lie algebra, one could come up with a form that makes this explicit.
Before, we introduced the canonical forms as
\begin{align}
             \mathfrak g=\mathrm{span}(\{ P\in \mathcal{P}| P^T Q+QP=0 \})\,,
\end{align}
where $Q=\mathbf{1}$ for $\mathfrak{so}$ and $Q=\mathbf{Y_0}$
for $\mathfrak{sp}$.
If we perform a Clifford transformation, this turns into 
\begin{align}
    P\mapsto P'=C^\dagger P C.
\end{align}
As such, the statement becomes 
\begin{alignat}{3}
    &&0&=(C P' C^\dagger)^T Q+Q (C P' C^\dagger)\\
    &&&=C^* P^{'T} C^TQ+Q C P' C^\dagger\\
    \Rightarrow && 0&=P^{'T} C^T QC+C^TQ C P',
\end{alignat}
using $C^\dagger C=C^TC^*=\mathbf{1}$.
With this the action maps
\begin{align}
    Q\mapsto Q'= C^T QC.
\end{align}
This map still maps Paulis to Paulis upon conjugation up to phase. To see this, we can look at a particular generating set of the Clifford group, e.g.,  $(S,H,CNOT)$. On the one hand, since $H^T=H^\dagger$ and $CNOT^T=CNOT^\dagger$, these act as ordinary Cliffords, mapping Paulis to Paulis. On the other hand, for the map $Q\mapsto  S^{T}QS$, we have
\begin{align}
    \mathbf{1}\mapsto Z,\quad X\mapsto i X,\quad Y\mapsto iY,\quad Z\mapsto \mathbf{1}.
\end{align}
However, for the relation that concerns us, this global phase becomes irrelevant. 
Additionally, we have that
\begin{align}
    Q'^T=C^T Q^TC = Q'\times \begin{cases}
        1 &Q^T=Q,\\
        -1 & Q^T=-Q,
    \end{cases}
\end{align}
i.e., the symmetric/anti-symmetric property is preserved under the 
action of this map. Additionally, we can map any (anti-)symmetric Pauli to any other (anti-)symmetric Pauli. For this, we can change $\mathbf{1}\mapsto Z$ by applying $S$, $Z\mapsto X$ by applying $H$ and  $X\mapsto I$, by applying $HS$. In order to change the number of $Y$ Paulis we can use CNOTs to map $Y\otimes Y\mapsto X\otimes Z$. 
This then motivates the definition that is given in the main text.

Another interesting case in this regard, is that of the class $\textit{B3}_I$ where the canonical form is defined by
\begin{align}
             \mathfrak g=\mathrm{span}(\{ P\in \mathcal{P}\backslash \{\mathbf{1}\}|[Y_{n+1},P]=0 ,\,P^T+ P=0 \})\,.
\end{align}
Again, applying a Clifford to the argument this becomes 
\begin{align}
    0&=(C P' C^\dagger)Y-Y(C P' C^\dagger), \quad 0=(C P' C^\dagger)^T +(C P' C^\dagger),\\
    0&= P' C^\dagger Y C-C^\dagger YC P',  \quad 0=P^{'T} C^T C+C^T C P',
\end{align}
meaning 
\begin{align}
    K\coloneqq C^TC\, ,\quad Q\coloneqq C^\dagger YC.
\end{align}
Thus, when the condition is satisfied
\begin{align}
    Q^TK+KQ&=(C^\dagger YC)^T (C^TC)+(C^TC)(C^\dagger YC)\\
    \nonumber
    &=C^T Y^TC^*C^TC+C^TCC^\dagger YC\\
    \nonumber
    &=-C^T YC+C^T YC=0.
    \nonumber
\end{align}
Naturally, $Q$ can become every non identity Pauli. Using only $H,CNOT$, it is possible to create every symmetric Pauli: $C=CNOT_{i,n}$ to set it to $X_n$, $H_n$ to set $X_n$ to $Z_n$, and CNOT on $X\otimes Z$ to create $Y\otimes Y$. Moreover, since all gates are real $C^T=C^\dagger$, which means that $\mathbf{1} \mapsto C_1^T C_1 =\mathbf{1}$ as well.

Next, we choose a $C_2$, such that $K=C_2^TC_2$. Naturally, $C=C_1C_2$ maps $K\mapsto K$, $Q\mapsto C_2^\dagger C_1^\dagger YC_1C_2$. Now, choosing $C_1$ such that $C_1^\dagger YC_1=C_2 Q C_2^\dagger$ we get to the result we wanted. However, by design $(C_1^\dagger YC_1)^T=-C_1^\dagger YC_1$, meaning 
\begin{align}
    (C_2 Q C_2^\dagger)^T&=-C_2 Q C_2^\dagger,\\
    C_2^* Q^T C_2^T&=-C_2 Q C_2^\dagger,\\
     Q^T C_2^TC_2&=-C_2^TC_2 Q,\\
     Q^T K&=-K Q\,.
\end{align}
The current reduction requires $K=K^T$. However, we can rewrite the commutation relation in an equivalent form
\begin{align}
    0&=[Q,P]=QP-PQ=QP+KP^TKQ, \\
    0&=KQP+P^TKQ, 
\end{align}
as well as
\begin{align}
    0&=Q^TK+KQ=(KQ)^T+KQ
\end{align}
using that $K=K^T$.
So if we define $L\propto KQ$, we get that $L^T=-L$ and the conditions that $\forall P\in \mathcal{P}\backslash L$:
\begin{align}
    P^TL+LP&=0\, ,\quad P^TK+KP=0\,,
\end{align}
which shows that $B3_I$ corresponds to a system with both a symplectic and an orthogonal relation.
If we now use the symplectic relation we get
\begin{align}
    P^T&=-LPL, \\
    0&=-LPLK+KP, \\
    0&=[P,LK].
\end{align}
As such, defining $Q'\propto LK$, $K'=L$
we have
\begin{align}
    K^T&=K, \\
    (K'Q')^T&=K'Q', \\
    0&=(Q')^T K'+ K' Q'\,,
\end{align}
using that $(K')^T=-K'$.
This means that the requirement for $K$ to be symmetric can be removed, by the outlined transformations. This concludes the condition of the Clifford invariant form given in the main text. 

\newpage

\end{document}